\newtheorem{thm}{Theorem}
\newtheorem{cor}{Corollary}
\newtheorem{lem}{Lemma}
\newtheorem{defn}{Definition}
\newcommand{\Rmnum}[1]{\expandafter\@slowromancap\romannumeral #1@}
\begin{document}

\sloppy
\title{Cross-link RTS/CTS for MLO mm-Wave WLANs}
\author{Zhuoling Chen, Yi Zhong, \emph{Senior Member, IEEE}, Martin Haenggi, \emph{Fellow, IEEE}
\thanks{
Zhuoling Chen and Yi Zhong are with the School of Electron. Inf. \& Commun., Huazhong University of Science and Technology, Wuhan, China. 
Martin Haenggi is with the Dept. of Electrical Engineering, University of Notre Dame, US. 
The research has been supported by the National Natural Science Foundation of China (NSFC) grant No. 62471193.
Preliminary findings of this work were presented in the 23rd International Symposium on Modeling and Optimization in Mobile, Ad Hoc, and Wireless Networks (WiOpt 2025) \cite{11123194}. 

The corresponding author is Yi Zhong (yzhong@hust.edu.cn).}
}
\maketitle

\thispagestyle{plain} 

\begin{abstract}
The directional RTS/CTS mechanism of mm-wave Wi-Fi hardly resolves the hidden terminal problem perfectly.
This paper proposes cross-link RTS/CTS under multi-link operation (MLO) to address this problem and introduces a novel point process, named the 
generalized RTS/CTS hard-core process (G-HCP), to model the spatial transceiver relationships under the RTS/CTS mechanism, including the directional case and the omnidirectional case.
Analytical expressions are derived for the intensity, the mean interference, an approximation of the success probability, and the expected number of hidden nodes for the directional RTS/CTS mechanism.
Theoretical and numerical results demonstrate the performance difference between two RTS/CTS mechanisms.
The cross-link RTS/CTS mechanism ensures higher link quality at the cost of reduced network throughput.
In contrast, the directional RTS/CTS sacrifices the link quality for higher throughput.
Our study reveals a fundamental trade-off between link reliability and network throughput, providing critical insights into the selection and optimization of RTS/CTS mechanisms in next-generation WLAN standards.
\end{abstract}
\begin{IEEEkeywords}
millimeter wave, Wi-Fi, RTS/CTS mechanism, stochastic geometry
\end{IEEEkeywords}
\section{Introduction}
\label{sec:intro}
\subsection{Motivations}
Over the past decades, Wireless Local Area Network (WLAN) standards have evolved into one of the most widely adopted and efficient networking technologies. Among them, Wi-Fi has become the dominant solution for providing ubiquitous wireless connectivity. The first Wi-Fi standard, released by IEEE in 1997, enabled a maximum throughput of 2 Mbps \cite{620533}, thereby formalizing WLAN technology in the 2.4 GHz band. However, such performance has quickly become inadequate to satisfy the rapidly growing demand for high-speed wireless access.

To meet these demands, subsequent Wi-Fi generations have continuously enhanced their physical (PHY) and medium access control (MAC) capabilities. Notable advances include the introduction of Orthogonal Frequency-Division Multiplexing (OFDM) in IEEE 802.11a (Wi-Fi 2) \cite{815305} and Orthogonal Frequency-Division Multiple Access (OFDMA) in IEEE 802.11ax (Wi-Fi 6), as well as the extension of usable spectrum resources \cite{9793689}. In particular, Wi-Fi 6E further expanded into the 6 GHz band \cite{9165719}, while IEEE 802.11ad and 802.11ay pioneered the use of millimeter-wave (mm-wave) communications. 
Despite these advances, mm-wave transmissions suffer from severe penetration losses and blockage sensitivity, which fundamentally limit their large-scale deployment.

Along with the large-scale adoption of Wi-Fi \cite{ericsson2023mobility}, medium access coordination has emerged as a critical challenge. The hidden terminal problem, in particular, significantly degrades network performance. To mitigate this issue, IEEE introduced the Request-to-Send/Clear-to-Send (RTS/CTS) handshake \cite{1092767}. In mm-wave systems, beamforming is widely employed to combat high-frequency propagation impairments \cite{7306370}. However, the resulting directional transmissions render the conventional RTS/CTS mechanism less effective, as neighboring nodes outside the beam path may fail to detect control frames, thereby causing collisions (see Fig.~\ref{fig:hiddennode}).

\begin{figure*}
\centering
\includegraphics[width=0.75\textwidth]{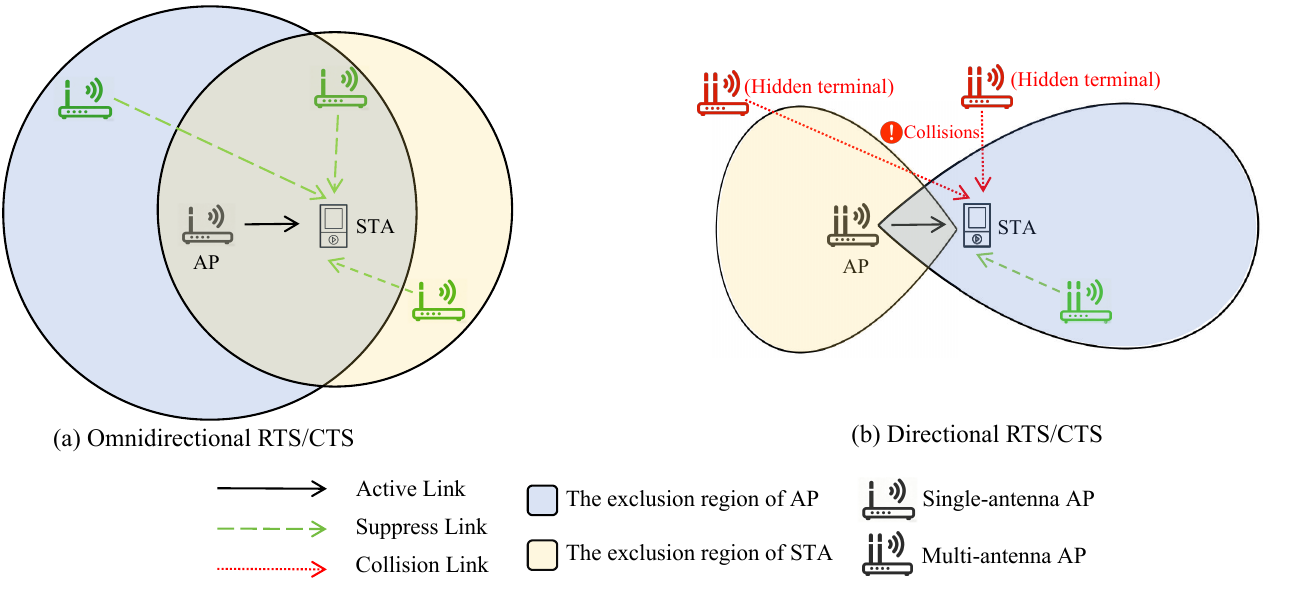}
\caption{
Illustration of the hidden terminal problem under omnidirectional and directional RTS/CTS mechanisms.
Three types of APs are shown: (i) black APs that are actively communicating with their associated Stations (STAs), (ii) green APs that are silenced by the RTS/CTS mechanism and thus prohibited from transmitting, and (iii) the remaining APs that act as hidden terminals and may cause collisions.
In the omnidirectional case, nearby APs detect the RTS/CTS frames and defer transmissions. 
In the directional case, the beamformed transmission prevents AP1 and AP2 from overhearing the RTS/CTS frames, even though they are located near the STA. As a result, AP1 and AP2 may initiate transmissions during the ongoing communication, leading to collisions.}
\label{fig:hiddennode}
\end{figure*}


To address these limitations, the IEEE 802.11be standard (Wi-Fi 7) has recently introduced multi-link operation (MLO), which allows simultaneous use of both sub-7 GHz and mm-wave links \cite{10058126,9855457,10978791}. This integration provides new opportunities to enhance MAC-layer coordination in mm-wave Wi-Fi. Motivated by this, we propose a novel \textit{cross-link RTS/CTS mechanism} that leverages the sub-7 GHz band for control frame exchange while maintaining mm-wave transmissions for high-rate data (see Fig.~\ref{fig:mld}). Specifically, RTS and CTS frames are transmitted in the sub-7 GHz band so that neighboring devices, regardless of beam orientation, can correctly update their Network Allocation Vector (NAV) and avoid collisions in the mm-wave band.

This work is thus motivated by the necessity to systematically model and analyze the spatial characteristics of both the conventional directional RTS/CTS and the proposed cross-link RTS/CTS mechanisms. By providing rigorous performance benchmarks, we aim to guide the design and optimization of next-generation WLAN protocols that integrate sub-7 GHz and mm-wave technologies under the MLO framework.

\subsection{Related Work}
The MLO framework in Wi-Fi 7 (IEEE 802.11be) represents a fundamental architectural milestone for future mm-wave WLANs. By integrating sub-7 GHz and mm-wave technologies, MLO enables next-generation wireless applications with higher throughput, lower latency, and enhanced reliability \cite{IEEE802.11_IMMW_2024}. Building upon this vision, the IMMW Study Group has proposed a Project Authorization Request (PAR) \cite{par}, which defines a new standard, IEEE 802.11bq, aimed at extending the reuse of sub-7 GHz PHY/MAC specifications to the mm-wave band under the 802.11be MLO framework. Despite these advances, the design of a robust RTS/CTS handshake mechanism tailored for integrated sub-7 GHz and mm-wave systems remains largely unexplored in the current literature.

In parallel, extensive research has been devoted to characterizing mm-wave networks through both empirical measurements and theoretical models \cite{6932503,8550813,9494282,68340753,6736994,7913628}. For instance, \cite{8550813} employs stochastic geometry to analyze mm-wave networks within finite regions using a tractable flat-top antenna pattern. While analytically convenient, the flat-top model introduces inaccuracies in evaluating network success probability. To mitigate this issue, \cite{7913628} proposes sinc and cosine antenna patterns, which significantly improve approximation accuracy and enable more precise derivations of network success probability. However, these models are predominantly developed under the assumption of a Poisson point process (PPP), which is poorly aligned with networks governed by RTS/CTS protocols.

To address this discrepancy, traditional hard-core point processes, such as the Matérn hard-core model \cite{chen2024characterizing,5934671,4215725}, have been adopted to study CSMA-based networks.
\cite{4215725} derive closed-form analytical expressions for crucial network performance metrics based on the type \Rmnum{2} Matérn hard-core model.
Yet, these models fall short of capturing the unique characteristics of RTS/CTS-enabled networks, where both transmitters and receivers enforce exclusion regions. To overcome this limitation, \cite{11018845} introduces the dual-zone hard-core process (DZHCP), which explicitly incorporates dual exclusion regions around transmitters and receivers. Although promising, the DZHCP framework is currently limited to omnidirectional antenna scenarios and cannot accurately model directional RTS/CTS mechanisms in mm-wave networks.

Another line of research focuses on the analytical tractability of the success probability in hard-core point processes, which is intractable in closed form. The work in \cite{11192202} derives an analytical bound under the densely packed assumption. To overcome this limitation, several studies have developed the Approximate SIR Analysis based on the PPP (ASAPPP) framework \cite{6897962,7322270,8648502}, which enables tractable approximations while preserving key spatial correlations. Initially developed for cellular models, ASAPPP provides accurate approximations by mapping hard-core processes onto PPP-based equivalents. More recently, \cite{11018845} extended the applicability of ASAPPP to bipolar models and demonstrated its validity in this broader context. Nonetheless, most existing studies employing ASAPPP or related methods remain focused on sub-7 GHz networks, typically assuming Rayleigh fading, which limits their applicability to directional mm-wave scenarios with RTS/CTS mechanisms.

\begin{figure}
    \centering
    \includegraphics[width=0.48\textwidth]{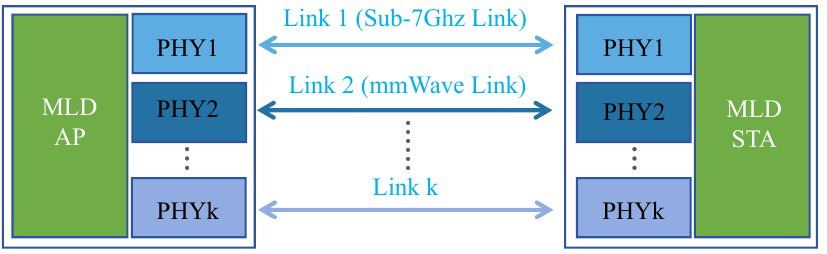}
    \caption{ Illustration of the mm-wave Wi-Fi supporting MLO.
    This mm-wave Wi-Fi devices are equipped with multiple independent physical layer for each link, enabling simultaneous communication on different frequency bands or channels.
    The connection between AP and STA includes multiple links including at least one link in the Sub-7 GHz band and another in the mm-wave band.
    If there is any blockage in the mm-wave link, the Sub-7 GHz link can ensure communication quality.}
    \label{fig:mld}
\end{figure}

\subsection{Contributions}
This paper focuses on the two types of RTS/CTS handshake mechanism.
We investigate the mean interference and the approximate success probability in mm-wave networks with the blockage effect reflected by a LOS ball blockage model under the RTS/CTS handshake mechanism.
The key contributions of our study are summarized as follows:

\begin{itemize}
    \item We propose a novel exclusion region model based on cosine antenna patterns, applicable to both directional and omnidirectional scenarios. The formulation maintains analytical tractability in performance analysis, specifically for the directional case.
    \item Based on the exclusion region, we introduce a novel hard-core point process, named generalized RTS/CTS hard-core process (G-HCP), to characterize the spatial properties of directional RTS/CTS mechanism and the cross-link RTS/CTS mechanism. Leveraging classical statistical measures and analytical tools, we derive the expressions for mean interference.
    \item We extend the ASAPPP method to approximate the success probability under integer-parameter Nakagami-$M$ fading channels, accounting for a finite LOS communication region. An exact expression for the mean interference-to-signal ratio (MISR) in a bounded domain is established to support this approximation.
    \item In the context of directional RTS/CTS, we provide an exact expression for the expected number of hidden nodes, revealing the impact of system parameters such as antenna array size, frequency, and node density.
\end{itemize}


\section{Stochastic Model for RTS/CTS-Enabled Networks}
\label{sec:model}
\subsection{Network and Channel Assumptions}

We consider a MLD network where each transceiver can operate simultaneously in the Sub-7 GHz and mm-wave bands. All devices employ a single omnidirectional antenna for Sub-7 GHz communications, while the mm-wave transmission relies on uniform linear arrays (ULAs) with $N_{\rm t}$ (for MLD transmitters) and $N_{\rm r}$ (for MLD receivers) antenna elements. The mm-wave reception is omnidirectional.
Omnidirectional transmission is restricted to the Sub-7 GHz band and is triggered only when (i) the cross-link RTS/CTS mechanism is enabled, and (ii) the transmitted frame is a cross-link RTS or CTS frame. For all other cases, mm-wave directional transmission is applied, whereas reception remains omnidirectional.

Blockage is modeled using the LOS ball model \cite{6932503}, with LOS radius $R$ representing the maximum distance within which potential LOS interferers exist. Given that NLOS paths are typically over 20 dB weaker than LOS paths \cite{6834753}, we restrict our analysis to LOS channels. The mm-wave propagation follows a clustered channel model, where the channel vector between transmitter $y$ and its receiver is given by
\begin{equation}
\mathbf{h}_{y}=\sqrt{N_{\rm t}}\rho_{y}\mathbf{a}_\mathrm{t}^H(\vartheta_{y}),
\end{equation}
with fading coefficient $\rho_{y}$ modeled as Nakagami-$M$, and $\mathbf{a}_\mathrm{t}(\vartheta_{y})$ is the array response vector as
\begin{equation}
\mathbf{a}_{\mathrm{t}}(\vartheta_{y}) = \frac{1}{\sqrt{N_{\mathrm{t}}}} \left[1, \cdots, e^{j 2\pi k \vartheta_{y}}, \cdots, e^{j 2\pi (N_{\mathrm{t}}-1) \vartheta_{y}}\right]^T,
\end{equation}
where \( \vartheta_y = \frac{d_0}{\lambda} \sin \phi_y \), and \( d_0 \), \( \lambda \), and \( \phi_y \) represent the antenna spacing, wavelength, and physical angle of departure (AoD).

The optimal analog beamforming vector between transmitter $y$ and its intended receiver is 
\begin{equation}
\mathbf{w}_y = \mathbf{a}_{\mathrm{t}}(\varphi_{y}),
\end{equation}
yielding the beamforming gain
\begin{equation}
G_{\text{act}}(\vartheta_{y}-\varphi_{y}) = \left| \mathbf{a}_{\mathrm{t}}^H(\vartheta_{y}) \mathbf{w}_y \right|^2 
= \frac{\sin^2(\pi N_{\mathrm{t}}(\vartheta_{y}-\varphi_{y}))}{N_{\mathrm{t}}^2 \sin^2(\pi (\vartheta_{y}-\varphi_{y}))}.
\end{equation}
For tractability, we approximate this gain by a cosine antenna pattern \cite{7913628} in (\ref{Gt}).



\subsection{Generalized RTS/CTS Hard-Core Process}
We model the locations of transceivers using a Poisson bipolar process, where the transmitters are distributed according to a homogeneous PPP, each paired with a receiver at a fixed distance $d$ and a uniformly random angle $\theta$. The RTS/CTS mechanism is incorporated through a thinning procedure, which selectively removes transceiver pairs from the underlying process based on the RTS/CTS handshake rules. This allows us to capture both the directional RTS/CTS and the cross-link RTS/CTS in a unified framework.

The RTS/CTS operation induces exclusion regions around each transceiver pair, which consist of an RTS-cleaned region and a CTS-cleaned region. For omnidirectional RTS/CTS, the exclusion regions are isotropic, whereas directional RTS/CTS generates anisotropic exclusion regions due to beamforming.

We distinguish between two types of RTS/CTS thinning:

- \emph{Type \Rmnum{1} thinning}: A transceiver pair is retained only if no other potential transmitter lies within its exclusion region. This corresponds to a synchronous channel access model where time is slotted and only one pair can be active within a given exclusion region. Collisions occur if multiple transmitters attempt to access the channel simultaneously.

- \emph{Type \Rmnum{2} thinning}: Each transceiver is assigned a random time mark representing its channel access attempt. A transceiver is retained if no other transmitter with a smaller time mark lies within its exclusion region. This model reflects asynchronous contention, where multiple transceivers may coexist in a slot but priority is given to those with earlier access marks.

Both thinning schemes ensure that retained transceivers respect the RTS/CTS exclusion principle, with Type \Rmnum{1} emphasizing strict mutual exclusion and Type \Rmnum{2} offering a more realistic representation of distributed channel access.

To characterize the antenna effect, we approximate the actual beam pattern. 
For simplicity, we assume that the AoD between each transmitter and its associated receiver is zero, so the optimal analog beamforming vector reduces to
\begin{equation}
\label{4}
\mathbf{w}_{y} = \frac{1}{\sqrt{N_{\mathrm{t}}}} \left[1,1,\ldots,1\right]^T.
\end{equation}
The corresponding beamforming gain is
\begin{equation}
G_{\text{act}}(\vartheta_{y}) = \left| \mathbf{a}_{\mathrm{t}}^H(\vartheta_{y}) \mathbf{w}_y \right|^2 
= \frac{\sin^2(\pi N_{\mathrm{t}} \vartheta_{y})}{N_{\mathrm{t}}^2 \sin^2(\pi \vartheta_{y})}.
\end{equation}

Following \cite{7913628}, we approximate this beamforming gain by the cosine model
\begin{equation}
\label{5}
G_{\cos}(\vartheta) =
\begin{cases}
\cos^2\!\left(\tfrac{\pi N_{\mathrm{t}}}{2}\vartheta \right), & |\vartheta| \leq \tfrac{1}{N_{\mathrm{t}}}, \\
0, & \text{otherwise},
\end{cases}
\end{equation}
where $\vartheta = \tfrac{d_0}{\lambda}\sin \phi$.  

Using the small-angle approximation $\sin(x) \approx x$ for small $x$, we further obtain a simplified directional gain
\begin{equation}
\label{Gt}
G_{\rm t}(\phi) =
\begin{cases}
\cos^{2}\!\left(\tfrac{\pi N_{\mathrm{t}} d_0}{2 \lambda} \phi \right), & |\phi| \leq \tfrac{\lambda}{d_0 N_{\mathrm{t}}}, \\
0, & \text{otherwise}.
\end{cases}
\end{equation}

\begin{figure}
    \centering
    \includegraphics[width=0.45\textwidth]{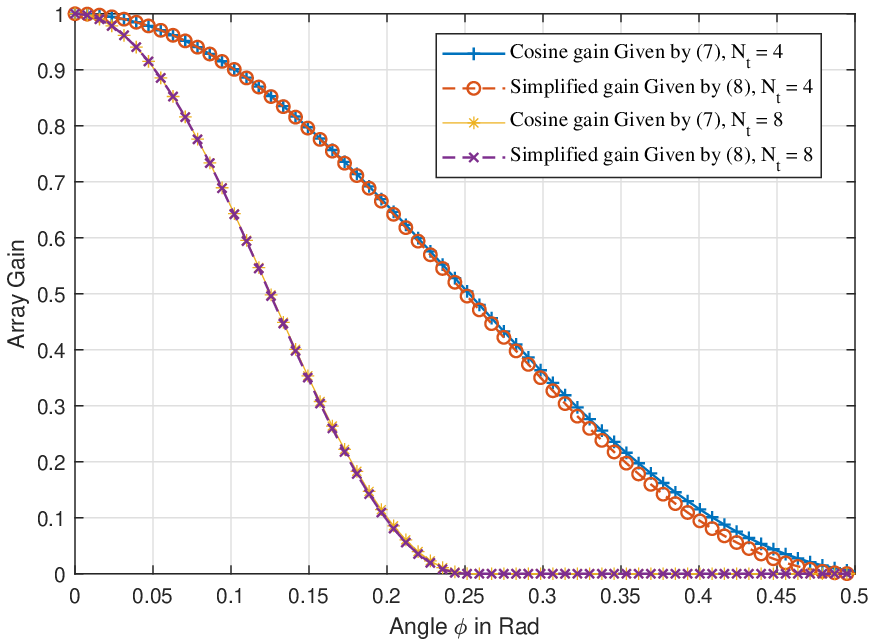}
    \caption{ The comparisons between the different gains.    }
    \label{fig:pattern}
\end{figure}

In Fig. \ref{fig:pattern}, we observe that the simplified gain provides a good approximation for the cosine gain.
Based on the simplified gain, we define the RTS/CTS exclusion regions.  
For transmitter $y$ and its receiver $x$, the RTS exclusion region is
\begin{equation}\begin{aligned}
S_{\rm t}(y) \triangleq \bigg\{ z \in \mathbb{R}^2 \mid \|\overrightarrow{yz}\| \leq R_{\rm t} \sqrt{G_{\rm t}(\phi)}, \;
\cos\phi = \frac{\overrightarrow{yz}\cdot \overrightarrow{yx}}{\|\overrightarrow{yz}\| \|\overrightarrow{yx}\|} \bigg\},
\end{aligned}\end{equation}
while the CTS exclusion region is
\begin{equation}
S_{\rm r}(x) \triangleq \bigg\{ z \in \mathbb{R}^2 \mid \|\overrightarrow{xz}\| \leq R_{\rm r} \sqrt{G_{\rm r}(\phi)}, \;
\cos\phi = \frac{\overrightarrow{xz}\cdot \overrightarrow{xy}}{\|\overrightarrow{xz}\| \|\overrightarrow{xy}\|} \bigg\},
\end{equation}
with the receiver gain
\begin{equation}
G_{\rm r}(\phi) =
\begin{cases}
\cos^{2}\!\left(\tfrac{\pi N_{\mathrm{r}} d_0}{2 \lambda} \phi \right), & |\phi| \leq \tfrac{\lambda}{d_0 N_{\mathrm{r}}}, \\
0, & \text{otherwise}.
\end{cases}
\end{equation}
Here, $R_{\rm r}$ and $R_{\rm r}$ denote the maximum transmission ranges of RTS and CTS frames.
Therefore, the exclusion region of the transceiver pair with transmitter $y$ and receiver $x$ is 
\begin{equation}
S(y,x)=S_{\rm t}(y) \cup S_{\rm r}(x).
\end{equation}

When the antenna spacing is 0, $d_0=0$, all antennas can be regarded as a single antenna, which corresponds to the omnidirectional case.
And the numbers of antennas ($N_{\rm t}$ and $N_{\rm r}$) represent the transmission powers.
In the special case, both $S_{\rm t}(y,x)$ and $S_{\rm r}(y,x)$ reduce to disks centered at $y$ and $x$. 
Therefore, the proposed exclusion region framework generalizes naturally to both omnidirectional (cross-link RTS/CTS) and directional RTS/CTS mechanisms, enabling a unified spatial model.

\begin{figure}
    \centering
    \includegraphics[width=0.48\textwidth]{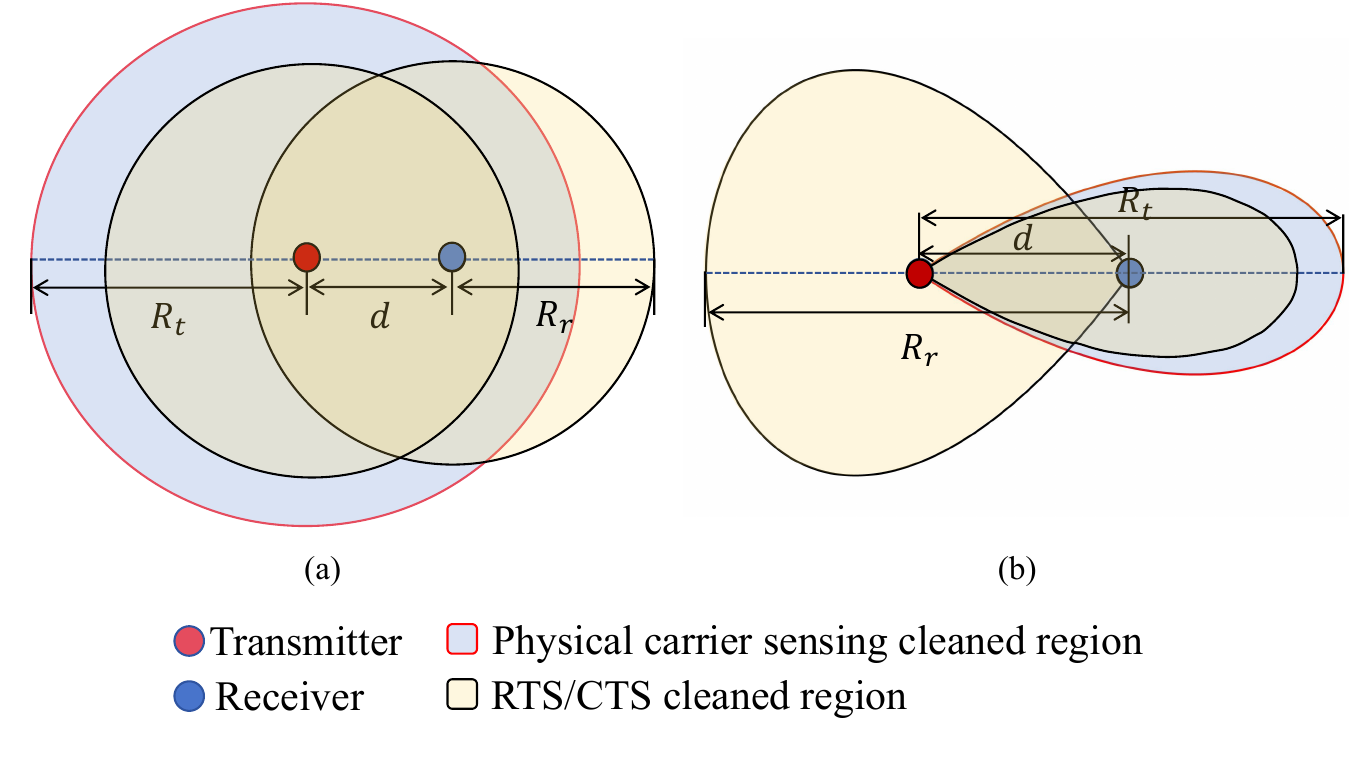}
    \caption{ Illustration of the exclusion region. 
    (a) is the case of omnidirectional case, i.e., the cross-link RTS/CTS mechanism.
    (b) is the case of directional case with $N_{\rm r}$=16 and $N_{\rm r}$=8.
    The exclusion region consists of two parts, i.e. the physical carrier sensing cleaned region and the RTS/CTS cleaned region.
    }
    \label{fig:exclusionregion}
\end{figure}

\subsection{Mathematical Description}
The G-HCP model employs a dependently marked PPP, denoted as ${\widetilde\Phi}_{\rm PPP}$, for spatial distribution of the transmitters.
This process is structured as follows:
\begin{itemize}
    \item $\Phi_{\rm PPP} = \{y_i\} \subset \mathbb{R}^2$ denotes the locations of potential transmitters modeled by a PPP with intensity $\lambda_p$;
    \item $x_i$ denotes the location of the receiver. $\theta_i$ is an i.i.d. orientation of the receiver $x_i$ for the transmitter $y_i$, uniformly distributed in $[0, 2\pi]$. Having assumed a constant distance of $d$ between a transmitter and its receiver, the orientation $\theta_i$ together with $y_i$ uniquely determines the receiver $x_i$.
    These marks determine the time order of transmissions for each transmitter;
    \item $e_i$ is the medium access indicator.
    Each transceiver pair $(y_i, x_i)$ has a medium access indicator, which determines whether the transmitter could transmit.
\end{itemize}

\subsubsection{\emph{Type \Rmnum{1} G-HCP}}
For the Type \Rmnum{1} G-HCP, let $\mathcal{N}(y, x,e)$ denotes the other transmitters within the exclusion region of transceiver pair $(y, x)$:
\begin{align}
\mathcal{N}&(y,x,e)\triangleq\big\{(y',x',e')\in\widetilde{\Phi}_{\rm PPP}\setminus\{(y,x,e)\}: \nonumber\\
&y'\in S_{\rm t}(y)\cup S_{\rm r}(x)\big\}.
\end{align}
The medium access indicator $e_i$ is a dependent mark as 
\begin{equation}
e_i=\mathds{1}(\#\mathcal{N}(y_i,x_i,e_{i})=0),
\end{equation}
where $\#$ denotes the count of elements in the neighborhood.

The set of transceiver pairs that have passed all RTS/CTS contention requirements, is denoted by $\tilde{\Phi}$
\begin{equation}
\tilde\Phi\triangleq \{(y,x,e)\in\tilde\Phi_{\rm PPP}\colon e=1\}.
\end{equation}

Finally, the set of active transmitters is defined as $\Phi$
\begin{equation}
 \Phi\triangleq \{(y,x,e)\in\tilde\Phi\colon y\}
\end{equation}

\subsubsection{\emph{Type \Rmnum{2} G-HCP}}
The difference between the Type \Rmnum{1} and the Type \Rmnum{2} is the time mark $t$, which reflects the time cost of a transmitter competing for channel access in the Type \Rmnum{2}.
\begin{itemize}
    \item $t_i$ is an i.i.d. time mark uniformly distributed in $[0, 1]$, representing the time cost of a transmitter competing for channel access, i.e., the relative initiation time of transmission for each transmitter.
\end{itemize}

For the Type \Rmnum{2} G-HCP, let $\mathcal{N}(y,x,t,e)$ be the neighborhood of the transmitters, which have smaller time marks than $t$ and are within the exclusion region of transceiver pair $(y,x)$.
\begin{align}
\mathcal{N}(y,x,t,e)\triangleq&\big\{(y',x',t',e'):(y',x',e')\in\widetilde{\Phi}_{\rm PPP}\setminus\{(y,x,e)\},
\nonumber\\
&y'\in S_{\rm t}(y)\cup S_{\rm r}(x), t'<t\big\}.
\end{align}
The medium access indicator $e_i$ of $(y_i,x_i,t_i,e_i)$ is a dependent mark defined as 
\begin{equation}
e_i=\mathds{1}(\#\mathcal{N}(y_i,x_i,t_i,e_i)=0).
\end{equation}

The set of transceiver pairs that have passed all RTS/CTS contention requirements, is denoted by \begin{equation}
\tilde\Phi\triangleq \{(y,x,e,t)\in\tilde\Phi_{\rm PPP}\colon e=1\}.
\end{equation}

Finally, the set of active transmitters is defined as $\Phi$
\begin{equation}
 \Phi\triangleq \{(y,x,e,t)\in\tilde\Phi\colon y\}
\end{equation}

The G-HCP is used to describe the spatial distribution of the active transmitters within a time slot.
$\Phi_{\rm PPP}$ is the set of the locations of the transmitters that attempt to transmit within this time slot, which means $\Phi_{\rm PPP}$ is reallocated within each slot.
For the Type \Rmnum{2} G-HCP, the time marks are redrawn within each slot. Consequently, in both Type \Rmnum{1} G-HCP and Type \Rmnum{2} G-HCP, a transmitter's suppression is a temporary state per slot, not a permanent condition.

\section{Network Performance Analysis}
\label{sec:i}
In this section, we analyze the network performance based on the proposed G-HCP model. Without loss of generality, we consider the typical transmitter $y_o$ located at the origin $o$ and its associated receiver $x_o$ positioned at $(d, 0)$.
To facilitate the derivations, we define several key quantities as follows
\begin{itemize}
    \item Transceiver pair parameters $(r,\beta,\theta)$: each transceiver pair is represented by $(r,\beta,\theta)$, where the transmitter and receiver are located at $y=(r \cos \beta,  r \sin \beta)$ and $x= (r \cos \beta + d \cos \theta,  r \sin \beta + d \sin \theta)$, respectively.

    \item Union exclusion region $V(r,\beta,\theta)$: the spatial region covered by the union of the exclusion zones of the typical transceiver and a transceiver at \((r, \beta, \theta)\).
    \item Spatial criteria for interference $S_1$, $S_2$, $S_3$ and $S_4$: Define the spatial criteria for transmitters within the exclusion regions, helping to determine interference conditions:
    \begin{itemize}
        \item $S_1 = \{(r, \beta, \theta) : r^2 \leq R_{\mathrm{t}}^2G_{\rm t}(\theta-\beta-\pi)\}$,
        \item $S_2 = \{(r, \beta, \theta) : r^2 + 2 r d \cos(\beta - \theta) + d^2 \leq R_{\mathrm{r}}^2G_{\rm r}(\arccos\frac{-r\cos(\theta-\beta)-d}{\sqrt{r^2 + 2 r d \cos(\beta - \theta) + d^2}})\}$,
        \item $S_3 = \{(r, \beta, \theta) : r^2 \leq R_{\mathrm{t}}^2G_{\rm t}(\beta)\}$,
        \item $S_4 = \{(r, \beta, \theta) : r^2 - 2 r d \cos\beta + d^2 \leq R_{\mathrm{r}}^2G_{\rm r}(\arccos\frac{r\cos\beta-d}{\sqrt{r^2 - 2 r d \cos\beta + d^2}})\}$,
    \end{itemize}
\end{itemize}

The set $S_1$ includes $(r, \beta, \theta)$ where $o$ is within its transmitter's exclusion region.
The set $S_2$ includes $(r, \beta, \theta)$ where $o$ is within its receiver's exclusion region.
The set $S_3$ includes $(r, \beta, \theta)$ of which the transmitter is within $S_{\rm t}(o)$.
The set $S_4$ includes $(r, \beta, \theta)$ of which the transmitter is within $S_{\rm r}(x_o)$.


\subsection{Node Intensity}
In the G-HCP model, the point process $\Phi$ is obtained through a dependent thinning of the original PPP $\Phi_{\rm PPP}$. Under the Palm probability measure $\mathbb{P}^{\Phi_{\rm PPP}}$, the node intensity of the retained process $\Phi_{\rm PPP}$ can be expressed as
\begin{equation}
    \lambda_b = \lambda_p \, \mathbb{P}^{\Phi_{\rm PPP}}(\text{$o$ retained}),
\end{equation}
where $\mathbb{P}^{\Phi_{\rm PPP}}(\text{node retained})$ denotes the mean retention probability of a node in $\Phi_{\rm PPP}$ after applying the thinning rule.

\subsubsection{Type \Rmnum{1} G-HCP}
In the Type \Rmnum{1} G-HCP, a node is retained only if its exclusion region is free of any other nodes from $\Phi_a$. This implies that each retained node corresponds to an empty exclusion zone. Consequently, the retention probability equals the void probability of a PPP with intensity $\lambda_p$, i.e.,
\begin{equation}
    \mathbb{P}^{\Phi_{\rm PPP}}(\text{$o$ retained}) = \exp(-\lambda_p V_o),
\end{equation}
where $V_o$ is the area of the exclusion region associated with the typical node. 
Hence, the resulting node intensity is
\begin{equation}
    \lambda_b = \lambda_p e^{-\lambda_p V_o}. \label{equ:lambda_1}
\end{equation}

When the parent intensity $\lambda_p$ is small, the retained intensity $\lambda_b$ increases approximately linearly with $\lambda_p$. However, as $\lambda_p$ becomes large, the exponential term dominates, leading to a decrease in $\lambda_b$. Therefore, $\lambda_b$ reaches its maximum value at $\lambda_p = 1/V_o$, beyond which it monotonically decreases due to the strong exponential decay of $\mathbb{P}^{\Phi_{\rm PPP}}(\text{$o$ retained})$.

\subsubsection{Type \Rmnum{2} G-HCP}
In the Type \Rmnum{2} G-HCP, each transceiver pair is associated with a time mark $t \in [0,1]$, drawn independently from a uniform distribution. The time mark determines the priority of a node during the contention process. Specifically, a transmitter $y$ with time mark $t$ is retained only if no other transmitters with smaller time marks are located within its exclusion region. 
The overall node intensity is obtained as
\begin{equation}
    \lambda_b = \lambda_p \int_0^1 e^{-\lambda_p t V_o} \, \mathrm{d}t = \frac{1}{V_o}\big(1 - e^{-\lambda_p V_o}\big). \label{equ:lambda_2}
\end{equation}


It can be seen from \eqref{equ:lambda_2} that $\lambda_b$ is a monotonically increasing function of $\lambda_p$, asymptotically approaching its upper limit of $1/V_o$ as $\lambda_p \to \infty$. 

\subsection{Mean Interference}
In this part, we will derive the mean interference at the typical transceiver pair in G-HCP.

The aggregate interference at the receiver $x_o$ is
\begin{equation}
\mathrm{I}_{x_{o}}=\sum_{y\in\Phi\setminus \{o\}}P_0N_\mathrm{t}|\rho_{y}|^2 G_{\rm t} \, l(\|y-x_{o}\|),
\end{equation}
where $P_0$ is the per-antenna transmit power, the channel gain $|\rho_{y}|^2$ follows a gamma distribution $\mathrm{Gamma}(M,\frac{1}{M})$ and $l(\cdot)$ is the path-loss function.

Firstly, we derive the mean interference in the network, where all interferers are LOS interferers.
The mean interference for the dual zone hard-core point process has been established in \cite{11018845}. 
Building upon this result, we extend the derivation to obtain the mean interference for the directional RTS/CTS modeled by G-HCP.

\begin{thm}
The mean interference $I_{x_0}$ experienced by the typical receiver $x_0$ in G-HCP, under the assumption that all interferers are LOS interferers, is
\begin{equation}\begin{aligned}
\label{tI1}
 & \mathbb{E}_{(o, 0)}^{!}\left[I_{x_o}\right]=\frac{\lambda_{p}^{2}P_{0}N_{\rm t}}{2\pi\lambda_b}\int_{0}^{\infty}\int_{0}^{2\pi}\int_{0}^{2\pi} \\
 & l\left(\sqrt{r^{2}-2rd\cos{\beta}+d^{2}}\right)G_{\rm t}(r,\beta,\theta)k(r,\beta,\theta)r\mathrm{d}\theta\mathrm{d}\beta\mathrm{d}r,
\end{aligned}\end{equation}
where $k(r,\beta,\theta)$ for the Type \Rmnum{1} process is
\begin{equation}
\label{k1}
k(r,\beta,\theta)=
\begin{cases}
0 \quad \text{if} (r,\beta,\theta)\in (S_1\cup S_2)\cup(S_3\cup S_4), \\
\exp\left(-\lambda_p V(r,\beta,\theta)\right) \quad \text{otherwise,} 
\end{cases}\end{equation}
and $k(r,\beta,\theta)$ for the Type \Rmnum{2} process is
\begin{equation}
\label{k2}
k(r,\beta,\theta)=
\begin{cases}
0 & (r,\beta,\theta)\in (S_1\cup S_2)\cap(S_3\cup S_4), \\
2p(V) & (r,\beta,\theta)\in\overline{S_1}\cap\overline{S_2}\cap\overline{S_3}\cap\overline{S_4}, \\
p(V) & \text{otherwise,}
\end{cases}\end{equation}
 in which V is the abbreviation of $V(r,\beta,\theta)$, and $p(V)$ is
 \begin{equation}p(V)=\frac{V_oe^{-\lambda_pV}-Ve^{-\lambda_pV_o}+V-V_o}{\lambda_p^2(V-V_o)VV_o}.\end{equation}
\end{thm}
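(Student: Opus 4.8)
The plan is to compute $\mathbb{E}_{(o,0)}^{!}[I_{x_o}]$ via the Campbell–Mecke formula applied to the underlying PPP $\Phi_{\rm PPP}$, conditioning on the typical pair $(o,x_o)$ being retained. Writing $I_{x_o}=\sum_{y\in\Phi\setminus\{o\}}P_0N_{\rm t}|\rho_y|^2 G_{\rm t}\,l(\|y-x_o\|)$, the reduced Palm expectation becomes $\lambda_b^{-1}$ times the expected sum over $\Phi_{\rm PPP}$ of the interference contribution of each point, weighted by the joint event that the typical pair is retained \emph{and} the point in question is retained (and lies in $\Phi$, i.e.\ actively transmits). By the Mecke–Slivnyak theorem for the marked PPP, this expands to an integral over the location $y$ (parametrized as $(r,\beta)$), over the interferer's receiver orientation $\theta$, and over the Nakagami fading $\rho_y$. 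Since $\mathbb{E}[|\rho_y|^2]=1$ for $\mathrm{Gamma}(M,1/M)$, the fading integrates out trivially, leaving the deterministic path-loss and gain factors. The distance from the interfering transmitter at $(r\cos\beta,r\sin\beta)$ to $x_o=(d,0)$ is $\sqrt{r^2-2rd\cos\beta+d^2}$, which explains that argument of $l(\cdot)$; the factor $r$ is the Jacobian of polar coordinates and the $1/(2\pi)$ normalizes the uniform orientation $\theta$.

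The core of the proof is identifying the correct retention kernel $k(r,\beta,\theta)$, which encodes $\mathbb{P}(\text{typical pair retained},\ (r,\beta,\theta)\text{-pair retained and active})$ relative to $\lambda_b/\lambda_p^2$. First I would observe that if the interferer's transmitter lies in the typical pair's exclusion region $S_{\rm t}(o)\cup S_{\rm r}(x_o)$ (i.e.\ $(r,\beta,\theta)\in S_3\cup S_4$) or the origin lies in the interferer's exclusion region ($(r,\beta,\theta)\in S_1\cup S_2$), then for Type~\Rmnum{1} the two pairs cannot both be retained, forcing $k=0$ exactly on $(S_1\cup S_2)\cup(S_3\cup S_4)$. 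Off this bad set, the two pairs are retained independently-enough that the joint void event is the void probability of $\Phi_{\rm PPP}$ on the \emph{union} of the two exclusion regions, which has area $V(r,\beta,\theta)$ (using $2V_o = $ area of each individual union region and inclusion–exclusion is absorbed into $V$); hence $k=\exp(-\lambda_p V(r,\beta,\theta))$, matching \eqref{k1}. For Type~\Rmnum{2}, the retention is governed by time marks: I would condition on the time marks $s,t\in[0,1]$ of the two pairs, note that each pair survives iff no earlier-marked transmitter falls in its exclusion region, and then integrate the survival probability $e^{-\lambda_p(\text{marked-region area})}$ over $(s,t)\in[0,1]^2$. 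The subtlety is whether the other pair's transmitter itself can suppress (which depends on the relative order of $s$ and $t$ and on membership in $S_1,\dots,S_4$), producing the three-way case split in \eqref{k2}; the $2p(V)$ on $\overline{S_1}\cap\overline{S_2}\cap\overline{S_3}\cap\overline{S_4}$ reflects that the two time orderings contribute symmetrically, while on the complement only one ordering is feasible. Carrying out the double integral over $[0,1]^2$ of the product of two exponentials $e^{-\lambda_p s V_o}e^{-\lambda_p t V}$ type terms (with appropriate region areas) yields the closed form $p(V)$.

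The main obstacle I anticipate is the Type~\Rmnum{2} bookkeeping: correctly decomposing the joint retention event into the contributions from the three relevant geometric cases and verifying that the marked-region areas appearing in the exponents are precisely $V_o$, $V$, or combinations thereof. In particular one must be careful that when $(r,\beta,\theta)\notin S_1\cup S_2\cup S_3\cup S_4$, the typical pair and the interfering pair each see an exclusion region that the \emph{other's transmitter is outside of}, so the suppression events are driven entirely by the \emph{other} PPP points lying in the union region of area $V$; whereas on the "otherwise" set exactly one of the pairs can be suppressed by the other's transmitter, breaking the symmetry and giving $p(V)$ rather than $2p(V)$. A secondary technical point is justifying the use of Campbell–Mecke with the \emph{dependent} marks $e_i$: this is legitimate because $e_i$ is a deterministic (measurable) function of the PPP configuration and the independent marks, so the reduced Palm expectation of the thinned process reduces to an expectation over the PPP under its Palm measure, which for a PPP equals the stationary measure with an added point. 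Everything after that — integrating out $\rho_y$, the polar-coordinate change of variables, and reading off the distance to $x_o$ — is routine, and the final step is simply substituting $k(r,\beta,\theta)$ into the Campbell formula and normalizing by $\lambda_b$ to obtain \eqref{tI1}.
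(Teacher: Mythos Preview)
Your proposal is correct and follows essentially the same route as the paper, which does not give a self-contained argument but instead observes that the beamforming gain $G_{\rm t}(r,\beta,\theta)$ can be absorbed into an equivalent path loss and then invokes Theorems~1 and~2 of \cite{11018845} for the DZHCP; your Campbell--Mecke derivation with the joint-retention kernel $k(r,\beta,\theta)$ is precisely the method carried out in that reference (and reused in Appendix~\ref{appendix:C} of the present paper via the second-order product density $\varrho^{(2)}$). The parenthetical ``$2V_o =$ area of each individual union region'' is a harmless slip of phrasing---each transceiver pair's exclusion region has area $V_o$, and $V(r,\beta,\theta)$ is the area of the union of the two pairs' exclusion regions---but your Type~\Rmnum{1} void-probability and Type~\Rmnum{2} time-mark case analysis are otherwise exactly right.
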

\begin{proof}
The key distinction between our model and that in \cite{11018845} lies in considering the beamforming gain.
Due to the typical receiver $x_o$ positioned at $(d, 0)$, the gain can be written as a function of $(r,\beta,\theta)$, i.e., $G_{\rm r}(r,\beta,\theta)$, where $(r,\beta,\theta)$ is the coordinate of the interfering transmitter.
In addition, the product of the path loss and the beamforming gain can be treated as an equivalent path loss. 
This allows us to extend the analytical approach presented in Theorems 1 and 2 of \cite{11018845} to complete the proof.
\end{proof}

Theorem 1 assumes that all interferers are LOS interferers.
Actually, in our model, mm-wave signals fall into two types: LOS and NLOS, which means that this assumption is impractical.
In this paper, we neglect the impact of NLOS signals due to their significantly weaker channel gains compared to LOS paths.
Under this simplification, the mean interference in the LOS ball model can be derived using Theorem 1.
\begin{cor}
    The mean interference $I_{x_o,R}$ experienced by the typical receiver $x_0$ in G-HCP with a LOS radius R is
    \begin{equation}\begin{aligned}
\label{tI2}
 &\mathbb{E}_{(o, 0)}^{!}\left[I_{x_o,R}\right]=\frac{\lambda_p^{2}P_{0}N_{\rm t}}{2\pi\lambda_b}\int_{0}^{2\pi}\int_{0}^{2\pi}\int_{0}^{\sqrt{R^2-(d\sin\beta)^2}+d\cos\beta}\\
&l\left(\sqrt{r^{2}-2rd\mathrm{cos}\beta+d^{2}}\right)G_{\rm t}(r,\beta,\theta)k(r,\beta,\theta)r\mathrm{d}\beta\mathrm{d}\theta\mathrm{d}r,
\end{aligned}
\end{equation}
where $k(r,\beta,\theta)$ is given by (\ref{k1}) for Type \Rmnum{1} and by (\ref{k2}) for Type \Rmnum{2}.
\end{cor}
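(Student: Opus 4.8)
The plan is to obtain the corollary as a direct specialization of Theorem~1: since NLOS contributions are neglected, passing from the free-space setting of Theorem~1 to the LOS-ball model amounts to discarding every interferer whose transmitter lies farther than $R$ from the typical receiver $x_o$. Formally, $\mathrm{I}_{x_o,R}=\sum_{y\in\Phi\setminus\{o\}}P_0N_{\rm t}|\rho_y|^2 G_{\rm t}\,l(\|y-x_o\|)\,\mathds{1}(\|y-x_o\|\le R)$, which is exactly $\mathrm{I}_{x_o}$ with the extra deterministic factor $\mathds{1}(\|y-x_o\|\le R)$ attached to each summand. This factor depends only on the position of the interfering transmitter relative to the fixed point $x_o=(d,0)$, and not on the rest of the configuration of $\widetilde\Phi_{\rm PPP}$ nor on any of the marks; hence it passes untouched through the reduced-Palm/second-order moment computation underlying (\ref{tI1}), simply multiplying the integrand there.

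Next I would turn this indicator into a restriction of the radial integration domain. With an interfering transmitter parametrized by $(r,\beta,\theta)$ as in Section~\ref{sec:i}, one has $y=(r\cos\beta,r\sin\beta)$ and $\|y-x_o\|=\sqrt{r^2-2rd\cos\beta+d^2}$, the same quantity already appearing inside $l(\cdot)$ in (\ref{tI1}). The constraint $r^2-2rd\cos\beta+d^2\le R^2$ is a quadratic inequality in $r$ with roots $d\cos\beta\pm\sqrt{R^2-(d\sin\beta)^2}$, so its solution set is the interval $[\,d\cos\beta-\sqrt{R^2-(d\sin\beta)^2},\ d\cos\beta+\sqrt{R^2-(d\sin\beta)^2}\,]$. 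Because $R$ is the maximal LOS distance whereas the intended transmitter--receiver separation $d$ is comparatively small, we have $R\ge d$, so the discriminant $R^2-(d\sin\beta)^2$ is nonnegative for every $\beta$ and the lower root is nonpositive; intersecting with $\{r\ge0\}$ then leaves $r\in[\,0,\ \sqrt{R^2-(d\sin\beta)^2}+d\cos\beta\,]$, precisely the upper limit in (\ref{tI2}).

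It then remains to check that nothing else in the integrand is affected. The RTS/CTS exclusion regions $S_{\rm t},S_{\rm r}$, the union region $V(r,\beta,\theta)$, the typical-pair area $V_o$, and therefore the retention kernel $k(r,\beta,\theta)$ in (\ref{k1}) and (\ref{k2}), are defined solely through $R_{\rm t}$, $R_{\rm r}$, the antenna patterns $G_{\rm t},G_{\rm r}$ and the distance $d$; none of them involves the LOS-ball radius $R$, so the blockage model acts purely on the propagation side and does not feed back into the RTS/CTS thinning. Likewise $G_{\rm t}(r,\beta,\theta)$, the path loss, the prefactor $\lambda_p^2P_0N_{\rm t}/(2\pi\lambda_b)$, and the Jacobian $r$ carry over verbatim. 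Substituting the reduced $r$-range into (\ref{tI1}) --- with the integrals reordered so that the now $\beta$-dependent $r$-integration is performed inside the $\beta$-integration --- yields (\ref{tI2}), with $k$ given by (\ref{k1}) for Type~\Rmnum{1} and by (\ref{k2}) for Type~\Rmnum{2}.

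The only delicate point is the geometric bookkeeping of the radial constraint: confirming that $R\ge d$ makes $R^2-(d\sin\beta)^2\ge0$ for all orientations $\beta$ and that the negative root never truncates the integral from below, so that the single scalar bound $\sqrt{R^2-(d\sin\beta)^2}+d\cos\beta$ captures the entire LOS condition. Everything else is bookkeeping: the essential structural input --- that the LOS-ball restriction is a deterministic, configuration-independent multiplicative factor and therefore commutes with the expectation --- is inherited directly from the derivation of Theorem~1.
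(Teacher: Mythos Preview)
Your proposal is correct and matches the paper's own treatment: the paper does not give a formal proof of the corollary but simply remarks that it is Theorem~1 restricted to the LOS disk of radius $R$ centered at the receiver, with equivalence recovered as $R\to\infty$. You have supplied exactly the missing geometric detail --- that $\|y-x_o\|\le R$ translates to the $\beta$-dependent radial bound $r\le\sqrt{R^2-(d\sin\beta)^2}+d\cos\beta$ under the standing assumption $R\ge d$ --- so your argument is a faithful (and more explicit) rendering of the intended derivation.
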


Compared to Theorem 1, Corollary 1 calculates the mean interference within the LOS region, which is a disk area of radius $R$ centered at the receiver.
As $R\to \infty$, Corollary 1 becomes equivalent to Theorem 1.
Physically, this describes a scenario that all links satisfy the LOS condition, which is consistent with the assumption of Theorem 1.

\subsection{Success Probability}
\label{sec:p}

\begin{figure*}
\begin{minipage}[t]{0.45\textwidth}
\centering
    \includegraphics[width=\columnwidth]{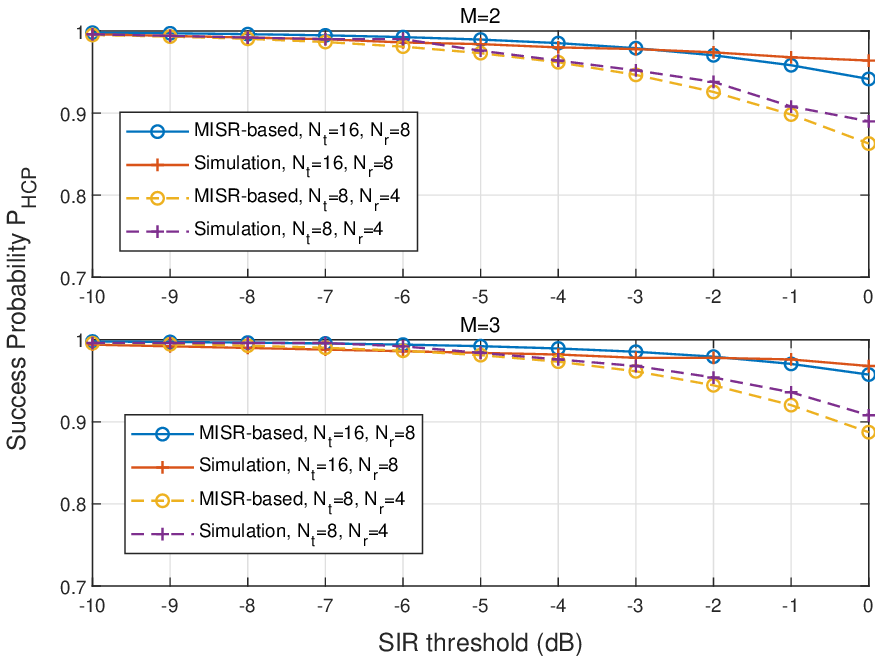}
    \caption{SIR ccdf for the type \Rmnum{1} hard-core process and MISR-based approximation for \(\lambda_p=4\times10^{-4} {\rm m}^{-2}\) and $R=300$m under the directional RTS/CTS mechanism.}
    \label{fig:ccdftype1}
    \end{minipage}
\hfill
    \begin{minipage}[t]{0.45\textwidth}
    \centering
    \includegraphics[width=\columnwidth]{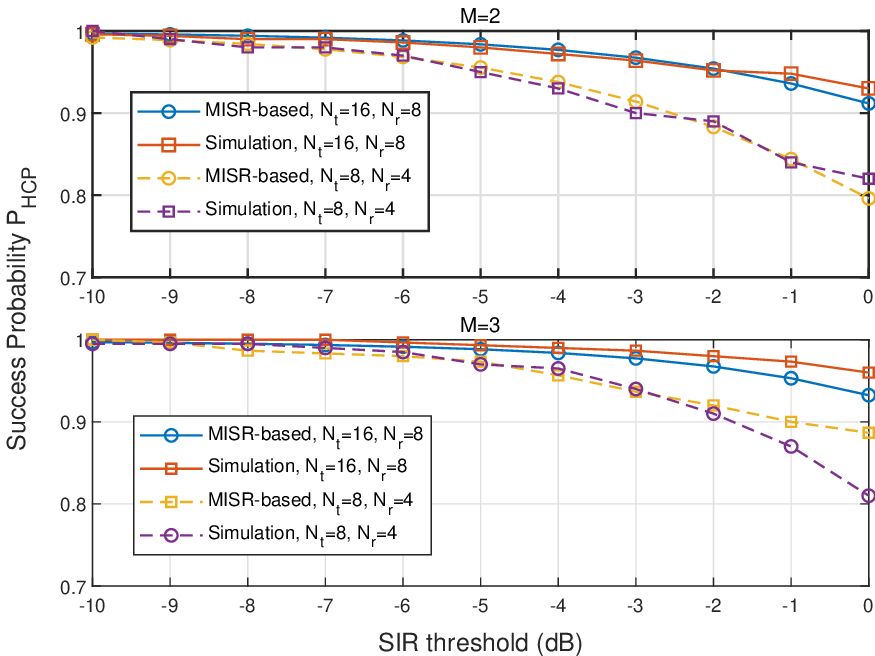}
    \caption{SIR ccdf for the type \Rmnum{2} hard-core process and MISR-based approximation for \(\lambda_p=4\times10^{-4} {\rm m}^{-2}\) and R=300m under the directional RTS/CTS mechanism.}
    \label{fig:ccdftype2}
    \end{minipage}
\end{figure*}
The transmission success probability serves as a fundamental performance metric in wireless network analysis.
However, for the hard-core process under consideration, deriving an exact analytical expression for the transmission success probability appears intractable. 

To address this issue, we employ ASAPPP to approximate the success probability.
This approach is motivated by prior work \cite{7913628}, which derived the success probability $P_{\rm PPP}(\theta)$ for a PPP model with nearest transmitter association under the LOS ball assumption.

In this study, we extend the applicability of ASAPPP to Nakagami-$M$ fading, where $M$ is the integer, for the bipolar network.
The success probability $P_{\phi}(\theta)$ for a point process $\phi$ is approximated via $P_{\rm PPP}(\theta)$ adjusted by $G$
\begin{align}
    P_{\phi}(\theta) \approx P_{\rm PPP}\bigg(\frac{\theta}{G}\bigg),
\end{align}
where the asymptotic gain $G$ defined in \cite{6897962} and reflects the relationship between the SIR characteristics of the PPP and $\phi$.

We will derive the asymptotic gain $G$ for both Type I and Type II G-HCP under the LOS ball assumption.


\begin{lem}
    The asymptotic gain for both Type I and Type II G-HCP with a LOS radius R
    \begin{align}
        G&=\nonumber
        \mathrm{MISR}_R\frac{2\pi\lambda_bl(d)}{\lambda_p^{2}P_{0}N_{\rm t}}
        \bigg(\int_{0}^{2\pi}\int_{0}^{2\pi}\int_{0}^{\sqrt{R^2-(d\sin\beta)^2}+d\cos\beta}\\
&l\left(\sqrt{r^{2}-2rd\mathrm{cos}\beta+d^{2}}\right)G_{\rm t}(r,\beta,\theta)k(r,\beta,\theta)r\mathrm{d}\beta\mathrm{d}\theta\mathrm{d}r\bigg)^{-1},
    \end{align}
    where $k(r,\beta,\theta)$ is given by (\ref{k1}) for Type \Rmnum{1} and is given by (\ref{k2}) for Type \Rmnum{2} and $\mathrm{MISR}_{R}$ is the mean interference-to-average-signal ratio of a PPP model with nearest transmitter association and a LOS radium R, written as
    \begin{align}
        \mathrm{MISR}_R&= \nonumber
        \frac{2}{\alpha -2}-\frac{\alpha}{\alpha -2} {}F_1(\lambda_p\pi R^2)\\& -\frac{4\lambda_p\pi R^2}{\alpha^2 -4}F_{2}(\lambda_p\pi R^2)+e^{-\lambda_p\pi R^2},
    \end{align}
    with $F_n(x)$ denoting the confluent hypergeometric function
    \begin{align}
        F_n(x)={}_{1}F_{1}\bigg(\frac{\alpha}{2};n+\frac{\alpha}{2};-x\bigg).
    \end{align}
\end{lem}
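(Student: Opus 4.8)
The plan is to establish the asymptotic gain $G$ by combining two ingredients: the definition $G = \lim_{\theta\to 0}\mathrm{MISR}_{\rm PPP}(\theta)/\mathrm{MISR}_{\phi}(\theta)$ from \cite{6897962}, and the fact that for bipolar-type models with a fixed link distance $d$ the relevant MISR reduces to the ratio of mean interference to the (deterministic) average signal power $P_0 N_{\rm t} l(d)$. First I would recall that, as shown in the ASAPPP literature and extended to bipolar models in \cite{11018845}, the horizontal gap between the SIR ccdf of $\phi$ and that of the PPP becomes constant in the high-reliability regime, and this constant equals the ratio of the mean interference-to-signal ratios. For the G-HCP, the average received signal power at the typical receiver is $P_0 N_{\rm t}|\rho_o|^2 l(d)$ with $\mathbb{E}|\rho_o|^2 = 1$ (since $|\rho_o|^2\sim\mathrm{Gamma}(M,1/M)$), so $\mathrm{MISR}_{\phi} = \mathbb{E}^!_{(o,0)}[I_{x_o,R}]/(P_0 N_{\rm t} l(d))$. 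Substituting the expression for $\mathbb{E}^!_{(o,0)}[I_{x_o,R}]$ from Corollary~1 gives the bracketed triple integral in the denominator, and the factor $2\pi\lambda_b l(d)/(\lambda_p^2 P_0 N_{\rm t})$ is exactly the reciprocal of the prefactor in \eqref{tI2} times $l(d)$; this is a direct algebraic rearrangement.

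Next I would handle the numerator, $\mathrm{MISR}_R$, which is the mean interference-to-average-signal ratio for the reference PPP model with nearest-transmitter association under the LOS ball of radius $R$. Here I would start from the PPP success probability $P_{\rm PPP}(\theta)$ and its associated interference statistics derived in \cite{7913628}, or equivalently compute directly: condition on the serving distance $r_0$, which in the LOS ball has the truncated Rayleigh-type density on $[0,R]$ with an atom of mass $e^{-\lambda_p\pi R^2}$ at "no LOS transmitter" (in which case the interference-to-signal ratio contribution is handled separately, yielding the additive $e^{-\lambda_p\pi R^2}$ term). Given $r_0$, the mean interference from the remaining PPP points in the annulus $(r_0, R)$, normalized by $l(r_0)$, integrates to an expression in terms of $\int_{r_0}^R (s/r_0)^{-\alpha}\, \lambda_p 2\pi s\, ds$; averaging over the distribution of $r_0$ produces incomplete-Gamma / confluent hypergeometric integrals, which is where the ${}_1F_1(\alpha/2; n+\alpha/2; -\lambda_p\pi R^2)$ functions $F_1$ and $F_2$ arise. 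I would carefully track the coefficients $2/(\alpha-2)$, $-\alpha/(\alpha-2)$, and $-4\lambda_p\pi R^2/(\alpha^2-4)$, most likely by recognizing that $\int_0^x u^{\alpha/2-1}e^{-u}\,du$ and $\int_0^x u^{\alpha/2}e^{-u}\,du$-type integrals convert to ${}_1F_1$ via the standard identity $\gamma(a,x) = a^{-1}x^a\, {}_1F_1(a; a+1; -x)$ combined with Kummer's transformation.

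The main obstacle, I expect, will be the bookkeeping in the $\mathrm{MISR}_R$ derivation — specifically, correctly accounting for the conditional serving-distance distribution in the truncated LOS ball (including the point mass at no-LOS), and then matching the three resulting integral terms to the precise combination of $F_1$, $F_2$, and the constant, as opposed to the conceptually straightforward part (the $\mathrm{MISR}_\phi$ ratio, which is essentially a restatement of Corollary~1). A secondary subtlety is justifying that the same $G$ serves \emph{both} Type~\Rmnum{1} and Type~\Rmnum{2} G-HCP: this follows because the only process-dependent quantity entering $G$ is the mean interference through $\lambda_b$ and $k(r,\beta,\theta)$, and Corollary~1 already supplies the correct $k$ for each type, so no separate argument is needed beyond invoking that corollary with the appropriate kernel. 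I would also note that the ASAPPP heuristic's validity under Nakagami-$M$ (integer $M$) fading rests on the fact that the horizontal-gap limit is fading-independent to first order — a point already argued in the extension to bipolar models — so I would cite that rather than re-derive it, and close by remarking that Figs.~\ref{fig:ccdftype1} and \ref{fig:ccdftype2} corroborate the approximation numerically.
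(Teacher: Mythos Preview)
Your treatment of $\mathrm{MISR}_{\rm G\text{-}HCP}$ matches the paper's: both simply read off the mean interference from Corollary~1 and divide by the average signal, and both observe that the distinction between Type~\Rmnum{1} and Type~\Rmnum{2} is entirely absorbed into $k(r,\beta,\theta)$.

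For $\mathrm{MISR}_R$, however, you and the paper take genuinely different routes. The paper does \emph{not} condition on the serving distance and integrate over the annulus $(r_0,R)$. Instead it uses the distance-ratio decomposition $\mathrm{I\bar S R}_R=\sum_{i\ge 2} v_i^\alpha\,\mathbb{P}(N\ge i)$ with $v_i=R_1/R_i$, exploits the PPP fact that $v_i$ is independent of $R_i$ so that $\mathbb{E}[v_i^\alpha\mathbb{1}(R_i\le R)]=\mathbb{E}[v_i^\alpha]\,\mathbb{P}(N\ge i)$, and then --- crucially --- computes the \emph{complement} $g(R)=\mathrm{MISR}_\infty-\mathrm{MISR}_R$ as a double sum, swaps the order of summation, and closes the inner sum via Gauss's hypergeometric theorem followed by Kummer's transformation. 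This series manipulation is what produces the specific parameters $(\alpha/2;\,n+\alpha/2)$ in $F_n$ directly. Your spatial-integration approach should yield the same number, but it naturally lands on incomplete-gamma expressions such as $N^{1-\alpha/2}\gamma(1+\alpha/2,N)$ with $N=\lambda_p\pi R^2$, which correspond to ${}_1F_1(1+\alpha/2;\,2+\alpha/2;-N)$ rather than ${}_1F_1(\alpha/2;\,n+\alpha/2;-N)$; you would still need a nontrivial contiguous-relation step to match the stated form.

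One specific point is off: the additive $e^{-\lambda_p\pi R^2}$ term is \emph{not} the contribution of the ``no LOS transmitter'' atom. When there is no LOS node there is neither signal nor interference, so that event contributes zero to the MISR, not $e^{-\lambda_p\pi R^2}$. In the paper's derivation that term arises purely from the $-1$ correction when the summation order in $g(R)$ is interchanged (the missing $(i,k)=(0,0)$ term), and in your integral approach it would emerge from expanding $1-(1+N)e^{-N}$ and combining with the other pieces, not from a point mass. This mis-attribution does not wreck your plan, but it signals that the bookkeeping you flag as ``the main obstacle'' is indeed where the work lies, and the paper's complement-and-series route sidesteps exactly that difficulty.
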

\begin{proof}
    See Appendix \ref{appendix:D}.
\end{proof}

\begin{figure*}
\begin{equation}
\label{ck}
c_k(r)=
\begin{cases}
\pi \lambda_p \left\{r^{2}-R^{2}+\delta R^{2}\mathbb{E}_{g}\big[\mathrm{E}_{1+\delta}(sR^{-\alpha}g)\big]-\delta r^{2}\mathbb{E}_{g}\big[\mathrm{E}_{1+\delta}(sr^{-\alpha}g)\big]\right\}, & k=0,\\
\frac{\pi\delta\lambda_ps^k}{k!}\left\{R^{2-\alpha k}\mathbb{E}_g\big[g^k\mathrm{E}_{1+\delta-k}(sR^{-\alpha}g)\big]-r^{2-\alpha k}\mathbb{E}_g\big[g^k\mathrm{E}_{1+\delta-k}(sr^{-\alpha}g)\big]\right\}, & k\geq1.
\end{cases}
\end{equation}
\end{figure*}

\begin{thm}
\label{thm:succ_prob}
The success probability of a receiver in Type \Rmnum{1} and Type \Rmnum{2} G-HCP, under LOS conditions with a LOS radius R and independent Nakagami-$M$ fading with integer M, can be approximated as
\begin{align}
\nonumber
P_{\rm HCP}(\theta)\approx
2\pi\lambda_p\int_{0}^{R} e^{-\pi \lambda_p r ^2}\|\exp{\{\mathbf{C}_{M}(r)\}}\|_{1}r\mathrm{d}r,
\end{align}
where $\|{\cdot}\|_{1}$ represents the sum of the first column and
\begin{align}
\nonumber
\mathbf{C}_{M}(r)=\left[
\begin{array}
{ccccc}{c_{0}(r)} & & & & \\
c_{1}(r) & c_{0}(r) & & & \\
c_{2}(r) & c_{1}(r) & c_{0}(r) & & \\
\vdots & & & \ddots & \\
c_{M-1}(r) & \cdots & c_{2}(r) & c_{1}(r) & c_{0}(r)
\end{array}\right],
\end{align}
whose exponent is given by (\ref{ck}), where $s=\frac{M\theta r^{\alpha}}{G}$, $\delta=2/\alpha$, $G$ is defined in Lemma 1 and $E_{p}(\cdot)$ is the generalized exponential integral.
\end{thm}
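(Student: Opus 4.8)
The plan is to combine the ASAPPP approximation $P_{\rm HCP}(\theta)\approx P_{\rm PPP}(\theta/G)$ from Lemma~1 with an exact evaluation of the PPP success probability under Nakagami-$M$ fading and the LOS ball blockage model. First I would set up the SIR at the typical receiver conditioned on the serving link distance $r$: the desired signal is $P_0 N_{\rm t}|\rho_o|^2 l(r)$ with $|\rho_o|^2\sim\mathrm{Gamma}(M,1/M)$, and the interference is the shot noise from the PPP of interferers restricted to the LOS ball of radius $R$. Conditioning on $r$ means the nearest-transmitter association gives the usual factor $2\pi\lambda_p r e^{-\pi\lambda_p r^2}$ when integrating over $r\in[0,R]$; outside $R$ there are no interferers and the success probability is essentially one, but the $e^{-\pi\lambda_p R^2}$ tail is absorbed into the structure. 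The effective threshold entering the PPP formula is $\theta/G$, which after the standard substitution appears as $s = M\theta r^\alpha/G$ in the Laplace-transform exponents.

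Next I would handle the Nakagami-$M$ (integer $M$) fading on the desired link. The success event is $|\rho_o|^2 > s\, I / (P_0 N_{\rm t})$ (after normalizing $l(r)=r^{-\alpha}$), and since $|\rho_o|^2$ is Gamma with integer shape $M$, the CCDF is the truncated exponential series $e^{-u}\sum_{k=0}^{M-1} u^k/k!$ evaluated at $u = sI$. Taking the expectation over the interference $I$ turns this into a sum of derivatives (in $s$) of the Laplace transform $\mathcal{L}_I(s)$, i.e. $\sum_{k=0}^{M-1}\frac{(-s)^k}{k!}\mathcal{L}_I^{(k)}(s)$. The standard trick here is to recognize that this weighted sum of Laplace-transform derivatives can be written compactly as $\|\exp\{\mathbf{C}_M(r)\}\|_1$, where $\mathbf{C}_M(r)$ is the lower-triangular Toeplitz matrix whose subdiagonal entries $c_k(r)$ are built from the log-Laplace transform and its derivatives: writing $\mathcal{L}_I(s)=\exp(c_0(r))$ gives the $(0,0)$ entry, and $c_k(r)$ for $k\ge 1$ collects the $k$-th derivative contributions. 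This matrix-exponential identity is a known device (it is the Faà di Bruno / exponential-of-Toeplitz representation of higher-order derivatives of $e^{g(s)}$), so I would cite or briefly verify it rather than re-derive it.

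The remaining work is to compute $c_k(r)$ explicitly, which is where the LOS ball truncation enters. The interference is $I=\sum_{y\in\Phi_{\rm PPP}\cap B(x_o,R)\setminus\{\text{nearest}\}} P_0 N_{\rm t}|\rho_y|^2 G_{\rm t} \|y-x_o\|^{-\alpha}$, so by the PGFL of the PPP, $\log\mathcal{L}_I(s) = -2\pi\lambda_p\int_r^R \big(1 - \mathbb{E}_g[\,(1+s g v^{-\alpha}/M)^{-M}\,]\big) v\,\mathrm{d}v$ with the interference starting at the nearest-neighbor distance $r$ (here $g$ absorbs the Gamma fading and antenna gain, and the $(1+\cdot)^{-M}$ comes from the Laplace transform of the Gamma interferer channel). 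Differentiating $k$ times with respect to $s$ and changing variables $v\to$ the argument of the generalized exponential integral $\mathrm{E}_{p}(\cdot)$ yields exactly the expressions in~(\ref{ck}): the $k=0$ case carries the $r^2-R^2$ term plus the $\mathrm{E}_{1+\delta}$ terms evaluated at $sR^{-\alpha}g$ and $sr^{-\alpha}g$, and the $k\ge 1$ case carries the $\mathrm{E}_{1+\delta-k}$ terms with the $s^k/k!$ prefactor. The main obstacle I anticipate is managing the bookkeeping of this differentiation-and-substitution step cleanly: getting the orders of the generalized exponential integrals ($1+\delta$ versus $1+\delta-k$), the powers of $R$ and $r$ ($R^{2-\alpha k}$), and the $\delta$ factors all consistent requires careful use of the recurrence $\frac{\mathrm{d}}{\mathrm{d}z}\mathrm{E}_p(z) = -\mathrm{E}_{p-1}(z)$ and of the integral definition $\int v^{1-\alpha k}(\cdots)\mathrm{d}v$; once that is done, assembling the $c_k(r)$ into $\mathbf{C}_M(r)$ and invoking the matrix-exponential identity finishes the proof, with the outer $r$-integral $\int_0^R e^{-\pi\lambda_p r^2}(\cdots) 2\pi\lambda_p r\,\mathrm{d}r$ coming from de-conditioning on the serving distance.
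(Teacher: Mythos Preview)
Your proposal is correct and follows the same overall approach as the paper---ASAPPP shift by $G$ followed by the PPP Nakagami-$M$ success probability in matrix-exponential form---but the two proofs allocate effort to complementary halves. The paper's proof spends its ink on justifying \emph{why} the MISR-based gain $G$ from Lemma~1 is still the right horizontal shift under Nakagami-$M$ fading: it shows via the asymptotic CDF $F_h(x)\sim \tfrac{M^{M-1}}{\Gamma(M)}x^M$ that the exact gain is $G_M=\big(\mathbb{E}[\mathrm{I\bar S R}^M]/\mathbb{E}[\mathrm{I\bar S R}_{\rm G\text{-}HCP}^M]\big)^{1/M}$ and then invokes the approximation $G_M\approx G$ from~\cite{6897962}. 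Having done that, the paper simply cites~\cite{7913628} for the matrix-exponential expression and the $c_k(r)$ coefficients, rather than re-deriving them. You do the opposite: you take the ASAPPP shift $P_{\rm HCP}(\theta)\approx P_{\rm PPP}(\theta/G)$ as given and sketch in detail the derivation of the Toeplitz/Fa\`a-di-Bruno representation and the explicit $c_k(r)$ via the PGFL and the recurrence for $\mathrm{E}_p$. Your route is more self-contained on the computational side; the paper's is more careful on the conceptual side (why $G$ and not some $M$-dependent gain). Together they cover the full argument.
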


\begin{proof}
    See Appendix \ref{appendix:B}.
\end{proof}
To assess the accuracy of the MISR approximation for the Nakagami-$M$ fading, we compare its results with simulation data in this work. 
Figures \ref{fig:ccdftype1} and \ref{fig:ccdftype2} present the success probability under varying values of $M$ and for commonly used values of SIR threshold under the directional RTS/CTS mechanism.
The proposed MISR approximation provides accurate approximations for practical success probabilities at low SIR thresholds and the gap between theoretical and simulated results expands with increasing SIR threshold.

The success probability for cross-link RTS/CTS is consistently observed to be approximately 1 in simulations and exhibits negligible dependence on the SIR threshold.
Due to the directional transmission and the isotropic exclusion region, the asymptotic gain $G$ is greater than 350. This high gain leads to an analytical approximation of 1, which is in excellent agreement with the simulation results.
Considering this almost invariant behavior, the results are omitted from Figure \ref{fig:ccdftype1} and Figure \ref{fig:ccdftype2}.

\subsection{Hidden Node}
\label{sec:hd}
Our preceding analysis demonstrates that the directional RTS/CTS mechanism cannot completely eliminate the hidden terminal problem. 
We now present a quantitative analysis of this limitation. Specifically, this section develops: an exact mathematical derivation of the expected number of hidden nodes under directional RTS/CTS operation.
We focus on the hidden nodes of the typical receiver $x_o$.
\begin{defn}
\label{defn1}
A \emph{hidden node} $y$ of $x_o$ in the G-HCP is defined as a transmitter that satisfies the following two conditions:
\begin{enumerate}
    \item $y \in \Phi$, i.e., the transmitter is active (retained after the thinning process);
    \item $x_o \in S_{\rm t}(y)$, i.e., the typical receiver $x_o$ lies within the exclusion region of transmitter $y$.
\end{enumerate}
In other words, a hidden node is an active transmitter whose exclusion region overlaps with the typical receiver, potentially causing interference despite the RTS/CTS mechanism.
\end{defn}

The exclusion region of each transceiver pair is caused by the RTS/CTS frame.
The other transceiver pairs in the exclusion region of a transceiver pair can decode the RTS/CTS frame transmitted by the transceiver pair.
This formal definition captures the essential characteristics of hidden nodes.

Let $\Phi_{\text{h}}$ be the set of all hidden nodes
\begin{equation}
\Phi_{\rm h}\triangleq\{y:y\in \Phi \setminus\{o\},x_o \in S_{\rm t}(y)\}.
\end{equation}
The number of hidden nodes $N_{\rm h}$ in the G-HCP is denoted as
\[ N_{\rm h} = \#\Phi_{\rm h}, \].

\begin{thm}
\label{hnthm}
The expected number of hidden nodes $N_{\rm h}$ in G-HCP is
\begin{equation}\begin{aligned}
  \mathbb{E}_{(o, 0)}^{!}\left[N_{\rm h}\right]&=\frac{\lambda_p^{2}}{\pi\lambda_b}\int_{0}^{\pi}\int_{0}^{R_{\rm r}\sqrt{1-(\frac{d\sin\beta}{R_{\rm t}})^2}-d\cos\beta}\int_{-f(r_y)}^{f(r_y)}\\
 & k\bigg(r,\beta,\theta+\arcsin\Big(\frac{r\cos\beta-d}{r_y}\Big)\bigg)r\mathrm{d}\beta\mathrm{d}r\mathrm{d}\theta,
\end{aligned}\end{equation}
where $k(r,\beta,\theta)$ is given by (\ref{k1}) for Type \Rmnum{1} and by (\ref{k2}) for Type \Rmnum{2}, $f(x)=\frac{2\lambda}{\pi N_{\rm t} d_0}\arccos(\frac{x}{R_{\rm t}})$ and $r_y=\sqrt{r^2+d^2-2rd\cos\beta}$. 
\end{thm}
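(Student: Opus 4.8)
The plan is to compute $\mathbb{E}_{(o,0)}^{!}[N_{\rm h}]$ by expressing $N_{\rm h}$ as a sum over the points of the underlying PPP $\Phi_{\rm PPP}$ of an indicator that a point is both retained and has the typical receiver $x_o$ in its transmitter-side exclusion region, and then applying the reduced Campbell--Mecke formula. Concretely, write
\begin{equation*}
N_{\rm h}=\sum_{(y,x,e,\cdot)\in\widetilde\Phi_{\rm PPP}\setminus\{(o,\cdot)\}}\mathds{1}(e=1)\,\mathds{1}\!\left(x_o\in S_{\rm t}(y)\right),
\end{equation*}
and use the fact that, under the reduced Palm measure of the parent PPP, the remaining points still form a PPP of intensity $\lambda_p$ with i.i.d.\ orientation and time marks. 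The reduced Campbell--Mecke theorem turns the expectation into $\lambda_p$ times an integral over $\mathbb{R}^2$ (for the interfering transmitter position) against $\mathbb{E}[\mathds{1}(e=1)\mid y]$, i.e.\ the conditional retention probability of a transmitter at that location given the presence of the typical pair. That conditional probability is exactly the quantity already identified in the proof of Theorem~1: it equals $\lambda_b^{-1}\lambda_p\,k(r,\beta,\theta)$ for the appropriate parametrization, where $k$ is given by (\ref{k1}) for Type~\Rmnum{1} and (\ref{k2}) for Type~\Rmnum{2}. So the structural part of the proof is a short citation of the machinery used for the mean interference, with $N_{\rm t}|\rho_y|^2 G_{\rm t}\,l(\cdot)$ replaced by the indicator $\mathds{1}(x_o\in S_{\rm t}(y))$.

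The real work is the geometry: parametrizing the integration domain $\{(r,\beta,\theta): x_o\in S_{\rm t}(y)\}$ and converting the integral over $\mathbb{R}^2$ (interferer position relative to $o$) into the stated $(r,\beta,\theta)$ coordinates. First I would fix the interfering transmitter at $y=(r\cos\beta, r\sin\beta)$; the constraint $x_o\in S_{\rm t}(y)$ says that $x_o=(d,0)$ lies in the cosine-shaped lobe centered at $y$ and oriented toward $x$. Writing $r_y=\|y-x_o\|=\sqrt{r^2+d^2-2rd\cos\beta}$, the distance constraint becomes $r_y\le R_{\rm t}\sqrt{G_{\rm t}(\psi)}$ where $\psi$ is the angle between $\overrightarrow{yx_o}$ and $\overrightarrow{yx}$; since $\overrightarrow{yx}$ has orientation $\theta$ and $\overrightarrow{yx_o}$ has a fixed orientation determined by $(r,\beta)$, this pins down the admissible window for $\theta$. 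The boundary case $r_y=R_{\rm t}$ forces $G_{\rm t}\ge (r_y/R_{\rm t})^2$, which via (\ref{Gt}) gives $|\psi|\le \frac{2\lambda}{\pi N_{\rm t}d_0}\arccos(r_y/R_{\rm t})=:f(r_y)$ — this is where the function $f$ in the statement comes from. Integrating out $\psi$ over $(-f(r_y),f(r_y))$ and recentering $\theta$ by the orientation offset $\arcsin\!\big((r\cos\beta-d)/r_y\big)$ of $\overrightarrow{yx_o}$ produces exactly the inner $\theta$-integral with shifted argument in the claimed formula. The outer $r$-limit $R_{\rm r}\sqrt{1-(d\sin\beta/R_{\rm t})^2}-d\cos\beta$ comes from requiring that the window $(-f(r_y),f(r_y))$ be nonempty, i.e.\ $r_y\le R_{\rm t}$, solved for $r$ as a function of $\beta$ (using the geometry of the chord), and by symmetry in $\beta\mapsto-\beta$ one halves the $\beta$-range to $[0,\pi]$ and absorbs a factor of $2$, explaining the $\lambda_p^2/(\pi\lambda_b)$ prefactor rather than $\lambda_p^2/(2\pi\lambda_b)$.

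The main obstacle I anticipate is bookkeeping the angular geometry cleanly — in particular, verifying that the Jacobian of the change of variables from Cartesian interferer position to $(r,\beta)$ (which contributes the $r\,\mathrm{d}r\,\mathrm{d}\beta$) is untouched by the reparametrization of the receiver orientation $\theta$, and that the shift $\theta\mapsto\theta+\arcsin\!\big((r\cos\beta-d)/r_y\big)$ correctly aligns the cosine lobe's axis with the direction $\overrightarrow{yx_o}$ so that the distance constraint collapses to the symmetric interval $[-f(r_y),f(r_y)]$. One must also check that within this regime the exclusion-region overlap function $V(r,\beta,\theta)$ entering $k$ is evaluated at the shifted $\theta$, consistent with the conditional retention probability inherited from Theorem~1; this is a matter of matching notation rather than new mathematics. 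Once the domain and the shift are pinned down, the remainder is substitution into the Campbell--Mecke expression and simplification.
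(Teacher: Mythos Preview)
Your proposal is correct and follows essentially the same route as the paper: the paper phrases the structural step via the reduced second-order factorial moment measure and second-order product density $\varrho^{(2)}=\tfrac{\lambda_p^2}{4\pi^2}k$, which is exactly your Campbell--Mecke argument with the joint retention probability $k(r,\beta,\theta)$ in different clothing, arriving at the same integral $\tfrac{\lambda_p^2}{2\pi\lambda_b}\int \mathds{1}(x_o\in S_{\rm t}(y))k(r,\beta,\theta)r\,\mathrm{d}r\,\mathrm{d}\beta\,\mathrm{d}\theta$. The geometric reduction---deriving $r_y$, the angular window $|\psi|\le f(r_y)$ from $G_{\rm t}$, the $\theta$-recentering, the $r$-limit from $r_y\le R_{\rm t}$, and the $\beta\mapsto-\beta$ symmetry collapsing $[0,2\pi]$ to $[0,\pi]$ with a factor of $2$---matches the paper's derivation step for step.
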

\begin{proof}
See Appendix \ref{appendix:C}.
\end{proof}

\section{Implementation of cross-link RTS/CTS Protocol}
\label{sec:implement}
\subsection{Channel Access Procedure}

\begin{figure*}
\centering
\includegraphics[width=1\textwidth]{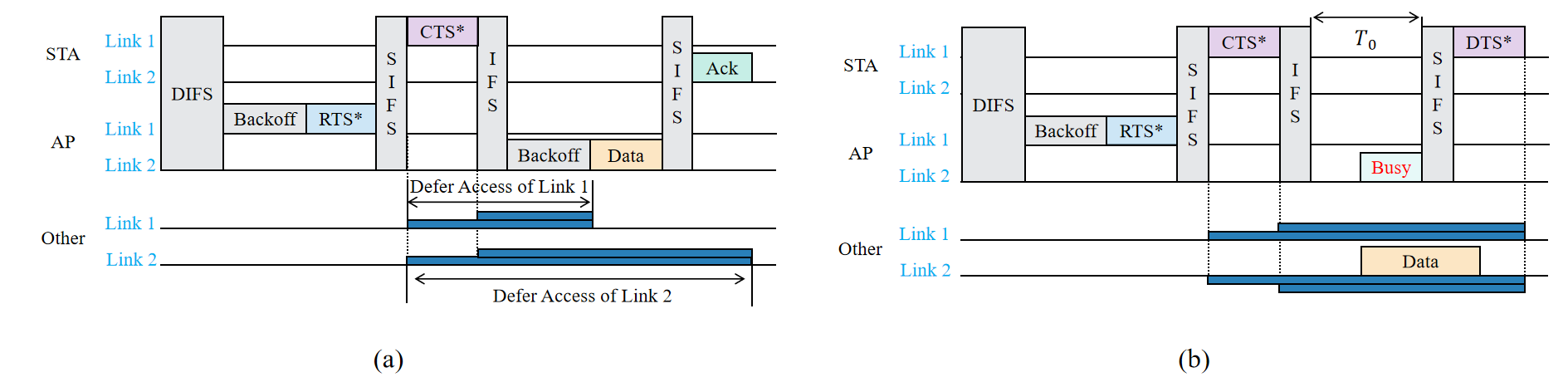}
\caption{
(a) Message exchange process of the cross-link RTS/CTS mechanism: the AP transmits an RTS* (cross-link RTS frame) over Link~1 (Sub-7\,GHz); the target STA responds with a CTS* (cross-link CTS frame) on the same link. Neighboring devices then set their NAV timers on Link~1 (Sub-7\,GHz) according to Duration~1 in the RTS* and CTS* frames, and simultaneously on Link~2 (mm-wave) according to Duration~2. Data transmission subsequently proceeds on the high-rate mm-wave link.  
(b) Message exchange process in the case where the AP fails to occupy the mm-wave channel within the specified time interval.
}
\label{process}
\end{figure*}

Unlike the traditional RTS/CTS mechanism, which operates entirely within a single frequency band, the cross-link RTS/CTS mechanism employs the Sub-7\,GHz band for control signaling while enabling data transmission over the mm-wave band.  
This dual-band coordination improves spatial reuse but also requires simultaneous reservation of resources in both Sub-7\,GHz and mm-wave channels, which constitutes the primary challenge of this mechanism.

The overall channel access procedure, illustrated in Fig.~\ref{process}(a), can be summarized as follows:
\begin{itemize}
    \item {Sub-7\,GHz channel contention:} The AP first contends for access to the Sub-7\,GHz channel;
    \item {cross-link RTS transmission:} Upon successful contention, the AP transmits a cross-link RTS frame over the Sub-7\,GHz link;
    \item {mm-wave channel assessment:} The target STA, after receiving the RTS frame, checks the availability of the mm-wave channel;
    \item {cross-link CTS response:} If the mm-wave channel is idle, the STA responds with a cross-link CTS frame on the Sub-7\,GHz link;
    \item {network reservation:} Neighboring devices overhearing either the RTS or CTS frame set their NAV timers for both the Sub-7\,GHz and mm-wave bands;
    \item {mm-wave Data transmission:} Following a specified Inter-Frame Space (IFS), the AP proceeds with data transmission on the mm-wave channel.
\end{itemize}

To reduce the risk of collisions during the transition to mm-wave transmission, a backoff procedure is introduced before the AP occupies the mm-wave channel. However, excessively long backoff periods would reduce spectral efficiency.  
To balance reliability and efficiency, a timeout mechanism is adopted: if the STA does not receive any data frame from the AP within a predetermined duration $T_0$ after sending the CTS frame, it broadcasts a cross-link DTS frame, thereby instructing neighboring devices to reset their NAV states, as illustrated in Fig.~\ref{process}(b).

\subsection{Control Frame Design}
\begin{figure*}
\centering
\includegraphics[width=0.75\textwidth]{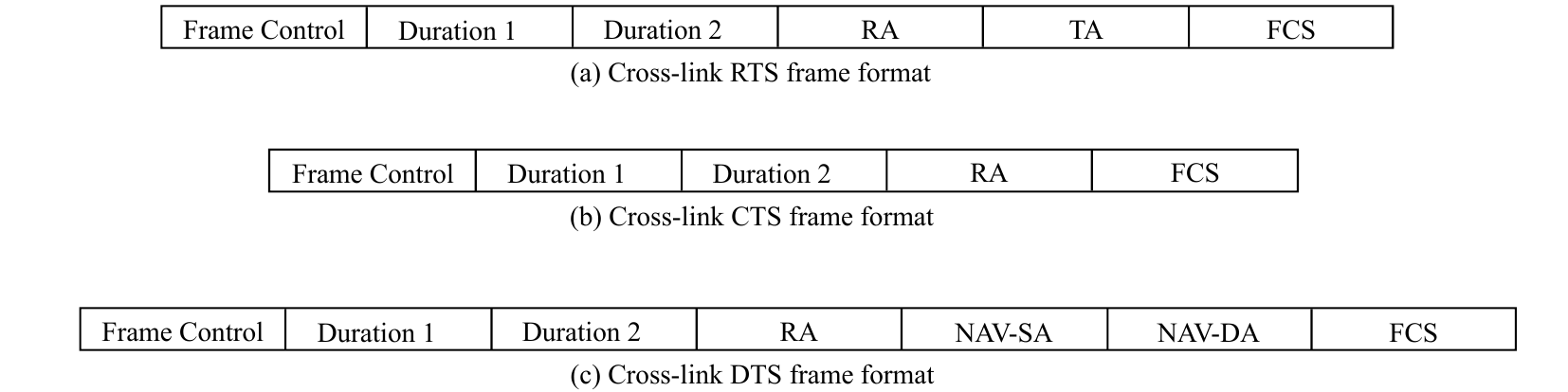}
\caption{Frame formats of the cross-link RTS/CTS mechanism.}
\label{frame}
\end{figure*}

The design of the cross-link RTS/CTS frame formats must accommodate the dual-band reservation property of the mechanism. Specifically, each control frame includes two duration fields, enabling neighboring devices to update their NAV timers on both the Sub-7\,GHz and mm-wave links, as shown in Fig.~\ref{frame}.  

In addition, the standard Frame Check Sequence (FCS) field, a 32-bit cyclic redundancy check (CRC) located at the end of each Wi-Fi frame, is used for error detection.  
Definitions of all other fields are summarized in Table~\ref{tab:frame_fields}.

\begin{table*}[htbp]
  \centering
  \caption{cross-link Control Frame Field Definitions}
  \label{tab:frame_fields}
  \begin{tabular}{p{0.18\textwidth} p{0.2\textwidth} p{0.2\textwidth} p{0.25\textwidth}}
    \toprule
    \textbf{Frame Type} & \textbf{Duration 1 / Duration 2} & \textbf{RA Field} &\textbf{Additional Fields} \\
    \midrule
    \textbf{cross-link RTS} 
    & the time that the entire transmission session occupies the Sub-7 GHz / mm-wave channels 
    & MAC address of the recipient STA 
    & \textbf{TA:} MAC address of the transmitting STA \\
    \midrule
    \textbf{cross-link CTS} 
    & Remaining time on Sub-7 GHz / mm-wave channels
    & TA from the cross-link RTS frame 
    & -- \\
    \midrule
    \textbf{cross-link DTS} 
    & Remaining time on Sub-7 GHz / mm-wave channels 
    & TA from the cross-link RTS frame 
    & \textbf{NAV-SA/NAV-DA:} MAC addresses of the STAs transmitting the cross-link RTS / cross-link CTS \\
    \bottomrule
  \end{tabular}
\end{table*}

\section{Performance Comparison Under a Unified Benchmark}
\label{sec:num}
\subsection{Unified Benchmark}
\label{sec:relationship}
To ensure a fair and consistent comparison between the proposed cross-link RTS/CTS mechanism and the conventional directional RTS/CTS mechanism, it is essential to establish a unified performance benchmark. A key consideration in this comparison is whether to normalize the exclusion region size across different configurations.

Assuming omnidirectional reception, the receive antenna gain is set to unity. A receiver can successfully decode a frame in either the Sub-7~GHz or mm-wave band when the received power satisfies
\begin{equation}
P_{\mathrm{rx}} \geq P_{\mathrm{rx(th)}},
\end{equation}
where $P_{\mathrm{rx(th)}}$ denotes the minimum detectable power threshold for the corresponding frequency band.

The maximum transmission ranges $R_1$ for frequency band 1 and $R_2$ for frequency band 2, respectively, under the Friis model are related by
\begin{equation}
\label{reference}
    \frac{R_{1}}{R_2}=\sqrt{\frac{P_{1}}{P_{2}}\cdot\frac{G_{1}}{G_{2}}}\cdot\frac{f_2}{f_{1}},
\end{equation}
where $P_i$, $G_i$ and $f_i$ are the transmission power of the single antenna, the transmitting antenna gain and frequency of frequency band $i$, respectively. 
For the mm-wave band, the directional antenna gain in the main-lobe direction equals the number of antenna elements, while for the Sub-7~GHz band, the omnidirectional transmit gain is unity.

In this study, the exclusion region size corresponding to the cross-link RTS/CTS mechanism with a transmit power of 20~mW is adopted as the reference.
The exclusion regions for both the directional and cross-link RTS/CTS mechanisms under other transmit powers are then scaled according to (\ref{reference}).

\subsection{Performance Comparison}
\label{sec:nr}
\begin{table}
\centering
\caption{System Parameters}
\label{table:parameter}
\begin{tabular}{c|l|c}
\hline
\textbf{Symbol} & \textbf{Parameter} & \textbf{Value} \\
\hline
$R$ & LOS radius & 300 m \\
$d$ & Transmitter–receiver distance & 20 m \\ 
$N_{\rm r}$ & Number of transmit antenna elements & 16 \\
$N_{\rm r}$ & Number of receive antenna elements & 8 \\
$\theta$ & SIR threshold & $-5$ dB \\
$\alpha$ & Path loss exponent & 2.1 \\
\hline
\end{tabular}
\end{table}

In this section, the terms \textit{cRTS/CTS} and \textit{dRTS/CTS} refer to the cross-link RTS/CTS and directional RTS/CTS mechanisms, respectively. 
For the reference configuration, the receiver range is set to $R_{\rm r} = 4d$ and the transmitter range to $R_{\rm t} = 4.8d$ for the cRTS/CTS scheme operating with a transmission power of 20 mW. 
The default system parameters are summarized in Table~\ref{table:parameter}. 
The transmission power per mm-wave antenna is $P_0 = 20$~mW, and the path loss follows $l(r) = (1 + r^{\alpha})^{-1}$. 

In the following figures, the legend labels “Type~\Rmnum{1}” and “Type~\Rmnum{2}” correspond to the Type~\Rmnum{1} and Type~\Rmnum{2} G-HCPs, respectively. 
The label $P_{\text{sub-7}}$ denotes the sub-7~GHz transmission power for cRTS/CTS. 
Two configurations of cRTS/CTS are evaluated, with $P_{\text{sub-7}} = 20$~mW and $P_{\text{sub-7}} = 40$~mW, respectively.


Figure \ref{fig:lambda} illustrates the relationship between the active node density $\lambda_b$ and the original PPP intensity $\lambda_p$ under various RTS/CTS configurations.
For all Type \Rmnum{1} models, the active node density $\lambda_b$ first increases with $\lambda_p$ and then decreases after reaching a peak.
The maximum occurs when $\lambda_p = \frac{1}{V_o}$, where $V_o$ denotes the area of the exclusion region, yielding a peak value of $\frac{1}{V_o} e^{-1}$.
In contrast, for Type \Rmnum{2} models, $\lambda_b$ increases monotonically with $\lambda_p$ and asymptotically approaches the reciprocal of the exclusion region.
It can also be observed that the dRTS/CTS mechanism yields a higher $\lambda_b$ than cRTS/CTS under identical exclusion conditions, indicating that cRTS/CTS achieves a stronger suppression of concurrent transmissions and thus a more effective interference mitigation.

Figure \ref{fig:interference} depicts the variation of the mean interference with respect to the original PPP intensity $\lambda_p$.
For all Type \Rmnum{1} models, the mean interference first increases with $\lambda_p$ and then decreases after reaching a peak.
A similar trend is observed for Type \Rmnum{2} models, where the mean interference evolves consistently with the active node density.
In particular, the mean interference in Type \Rmnum{2} models increases monotonically with $\lambda_p$ and gradually converges to a constant value. 
It is also observed that dRTS/CTS exhibits higher mean interference than cRTS/CTS under both Type \Rmnum{1} and Type \Rmnum{2} configurations, even though its active node density is lower.
This seemingly counterintuitive behavior stems from the inherent differences in the exclusion regions of the two RTS/CTS mechanisms and the directional gain of antenna arrays.
When $\lambda_p \to 0$, both mechanisms can be approximated as PPPs with identical intensities, since the exclusion effect becomes negligible.
The only distinction lies in the spatial distribution of transmitters: nodes under dRTS/CTS can be located closer to the typical receiver, leading to stronger interference.
Overall, cRTS/CTS provides a clear advantage by maintaining a lower interference level for most density regimes, thereby achieving superior link quality and more stable network performance.

\begin{figure}
    \centering
    \includegraphics[width=0.45\textwidth]{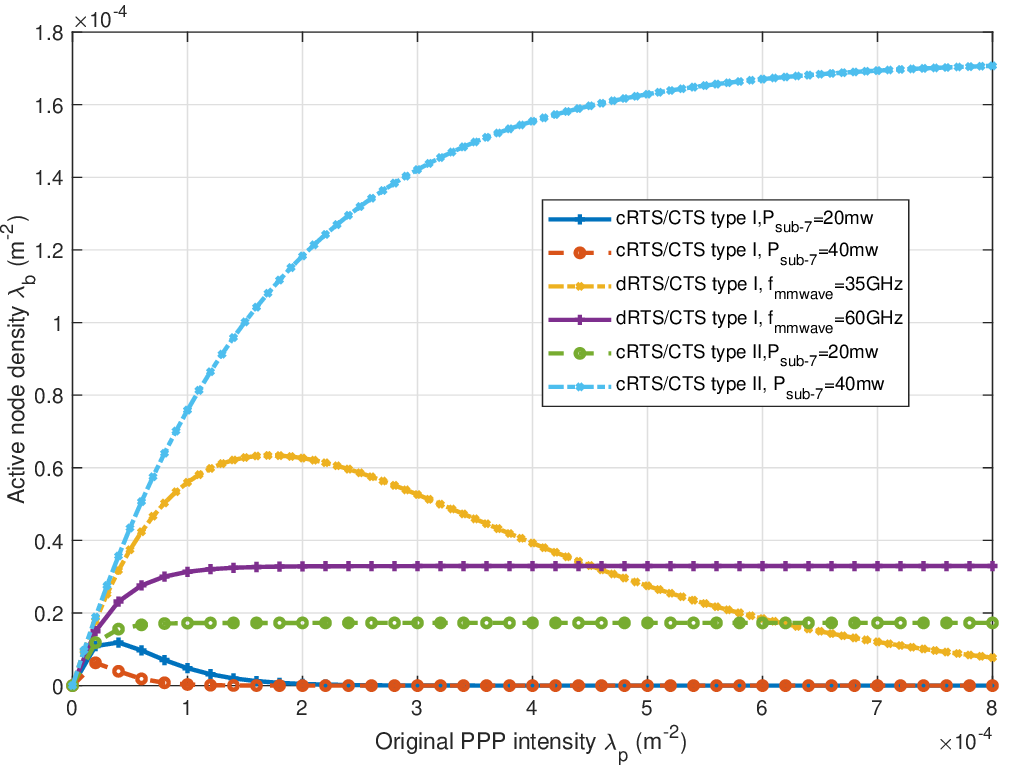}
    \caption{Active node density $\lambda_b$ as a function of the original PPP intensity $\lambda_p$ for various RTS/CTS configurations.}
    \label{fig:lambda}
\end{figure}

\begin{figure}
    \centering
    \includegraphics[width=0.45\textwidth]{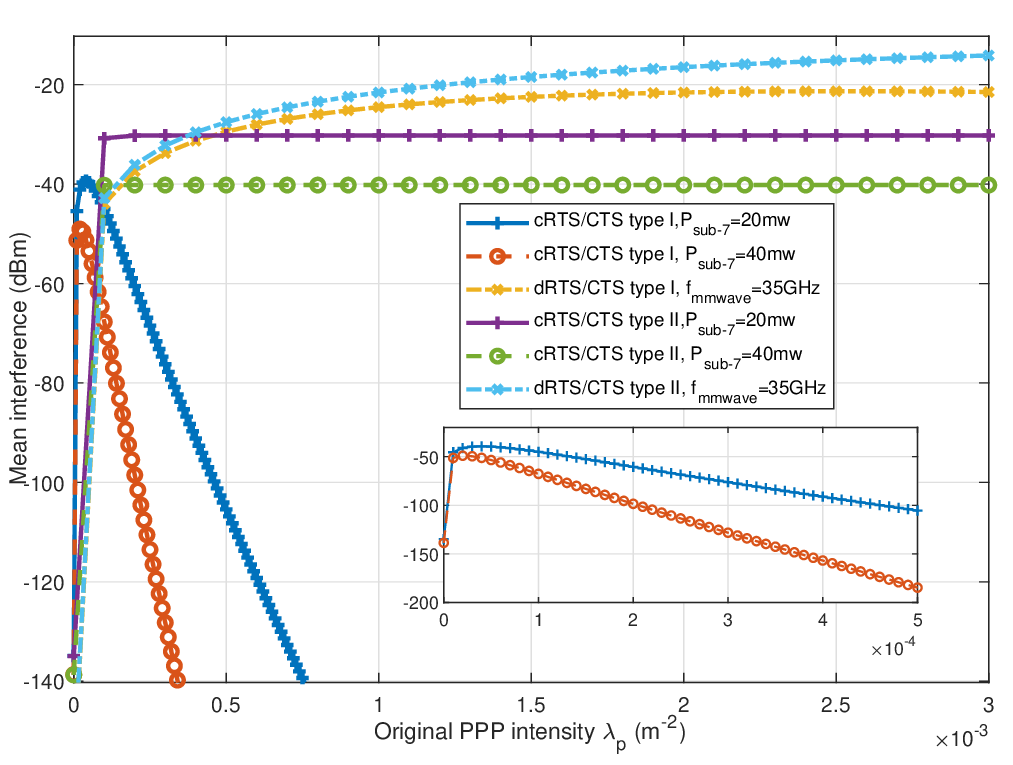}
    \caption{Mean interference as a function of the original PPP intensity $\lambda_p$ under different RTS/CTS mechanism}
    \label{fig:interference}
\end{figure}

Figure \ref{fig:ip} illustrates the variation of a key performance metric, the product of the success probability and the active transmitter density $\lambda_b$.
This metric effectively represents the network throughput.
For Type \Rmnum{1} schemes, the throughput trends closely follow those of the mean interference.
Both cRTS/CTS and dRTS/CTS exhibit an initial increase in throughput as $\lambda_p$ rises, followed by a decline beyond a certain point, indicating the existence of an optimal network density for Type \Rmnum{1} configurations.
A distinct behavior is observed under Type \Rmnum{2} schemes.
For cRTS/CTS, the throughput monotonically follows the variation of mean interference, while for dRTS/CTS, a non-monotonic pattern emerges—throughput first increases, then slightly decreases, and finally converges to a constant level.
This divergence can be explained by the interference characteristics shown in Figure \ref{fig:interference}.
At low $\lambda_p$, throughput for both mechanisms is mainly governed by the active node density.
However, when $\lambda_p > 1\times10^{-3}{\rm m}^{-2}$, the continuously increasing interference in dRTS/CTS leads to a reduction in success probability that outweighs the gain from higher node density, resulting in the observed throughput trend. 
Overall, despite the higher interference levels, dRTS/CTS achieves superior throughput performance compared to cRTS/CTS, benefiting from its more efficient spatial reuse of the wireless medium.

\begin{figure}
    \centering
    \includegraphics[width=0.45\textwidth]{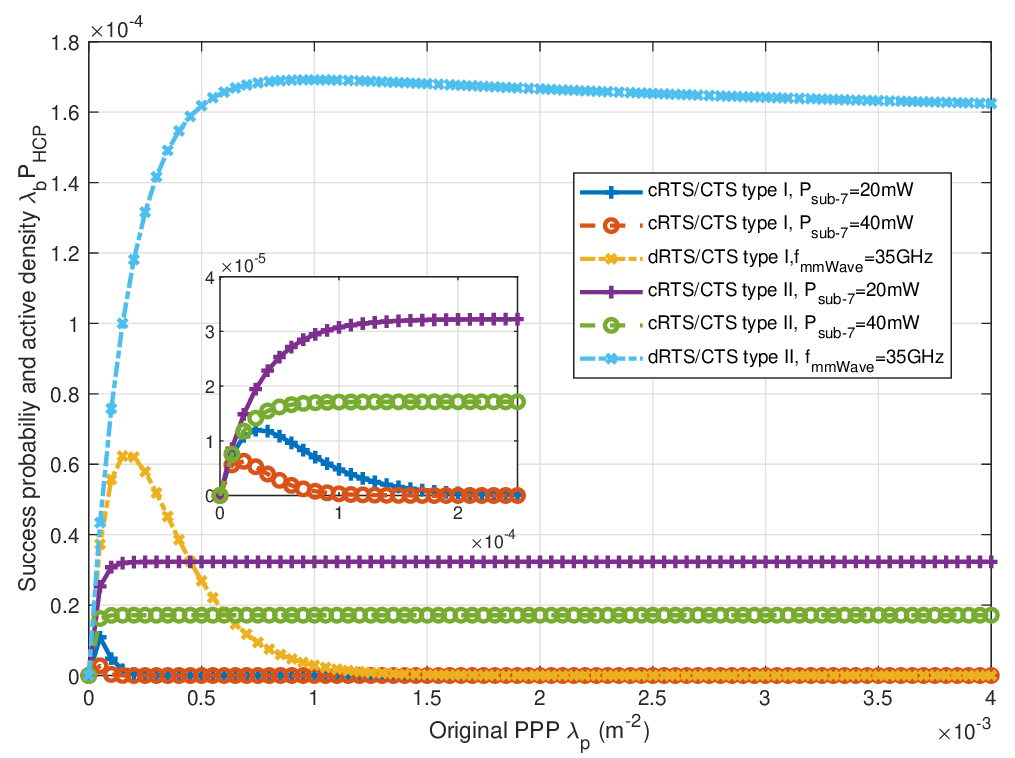}
    \caption{The network throughput as as a function of the original PPP intensity $\lambda_p$. This metric is defined as the product of the success probability and the active transmitter density and provides a measure of network throughput.}
    \label{fig:ip}
\end{figure}

Figure \ref{fig:number1} illustrates the relationship between the mean number of hidden nodes in Type \Rmnum{1} dRTS/CTS and the original PPP intensity $\lambda_p$ under different system configurations.
For Type \Rmnum{1} models, the mean number of hidden nodes first increases with $\lambda_p$ and then decreases after reaching a peak, indicating the existence of an optimal intensity that maximizes hidden-node occurrence.
A comparison between the second curve (35 GHz, $N_{\rm r}=16$, $N_{\rm r}=8$, $d=20$ m) and the fifth curve (60 GHz, $N_{\rm r}=16$, $N_{\rm r}=8$, $d=20$ m) shows that higher carrier frequency results in a larger maximum mean number of hidden nodes.
For a fixed frequency (35 GHz) and link distance ($d=10$ m), increasing $N_{\rm r}$ significantly raises the peak value, whereas increasing $N_{\rm r}$ yields only a marginal impact.
This observation indicates that the hidden-node count is mainly influenced by the transmitter-side exclusion region, since a larger $N_{\rm r}$ allows more distant transmitters to become hidden nodes despite the narrower beam coverage.
Comparing the first (35 GHz, $N_{\rm r}=16$, $N_{\rm r}=8$, $d=10$ m) and second (35 GHz, $N_{\rm r}=16$, $N_{\rm r}=8$, $d=20$ m) curves reveals that both achieve similar maximum values but at different $\lambda_p$, implying that merely scaling the exclusion region size, without altering its spatial structure, is insufficient to mitigate the hidden-node issue.
Overall, all curves exhibit peak values exceeding 3.5, highlighting the severity of the hidden-node problem in Type \Rmnum{1} dRTS/CTS within a moderate density range.
As $\lambda_p$ continues to increase, the mean number of hidden nodes gradually converges to zero, suggesting that the likelihood of hidden-node occurrence diminishes in denser Type \Rmnum{1} networks.

\begin{figure}
    \centering
    \includegraphics[width=0.45\textwidth]{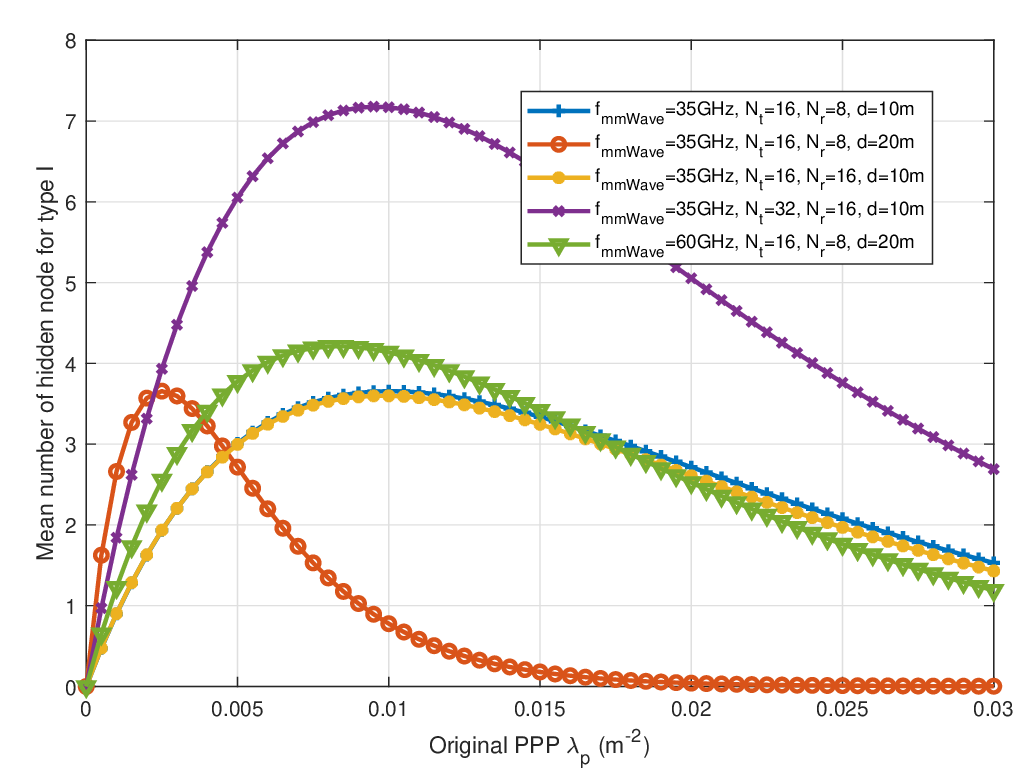}
    \caption{The mean number of hidden node for Type \Rmnum{1} under different system configurations}
    \label{fig:number1}
\end{figure} 

Figure \ref{fig:number2} illustrates the relationship between the mean number of hidden nodes in Type \Rmnum{2} dRTS/CTS and the original PPP intensity $\lambda_p$ under different system configurations.
For Type \Rmnum{2} models, the mean number of hidden nodes increases monotonically with $\lambda_p$ and gradually converges to a steady-state value as $\lambda_p$ grows large.
This steady value exhibits similar parameter-dependent behavior to the peak value observed in the Type \Rmnum{1} case.
The fundamental distinction between Type \Rmnum{1} and Type \Rmnum{2} lies in their asymptotic behavior: in Type \Rmnum{1}, increasing node density eventually suppresses hidden-node occurrences, whereas in Type \Rmnum{2}, the problem persists even at high densities.
Furthermore, both the steady value for Type \Rmnum{2} and the peak value for Type \Rmnum{1} are consistently higher in dRTS/CTS than in cRTS/CTS, indicating a more severe hidden-node issue and consequently lower link reliability in directional RTS/CTS networks.


\begin{figure}
    \centering
    \includegraphics[width=0.45\textwidth]{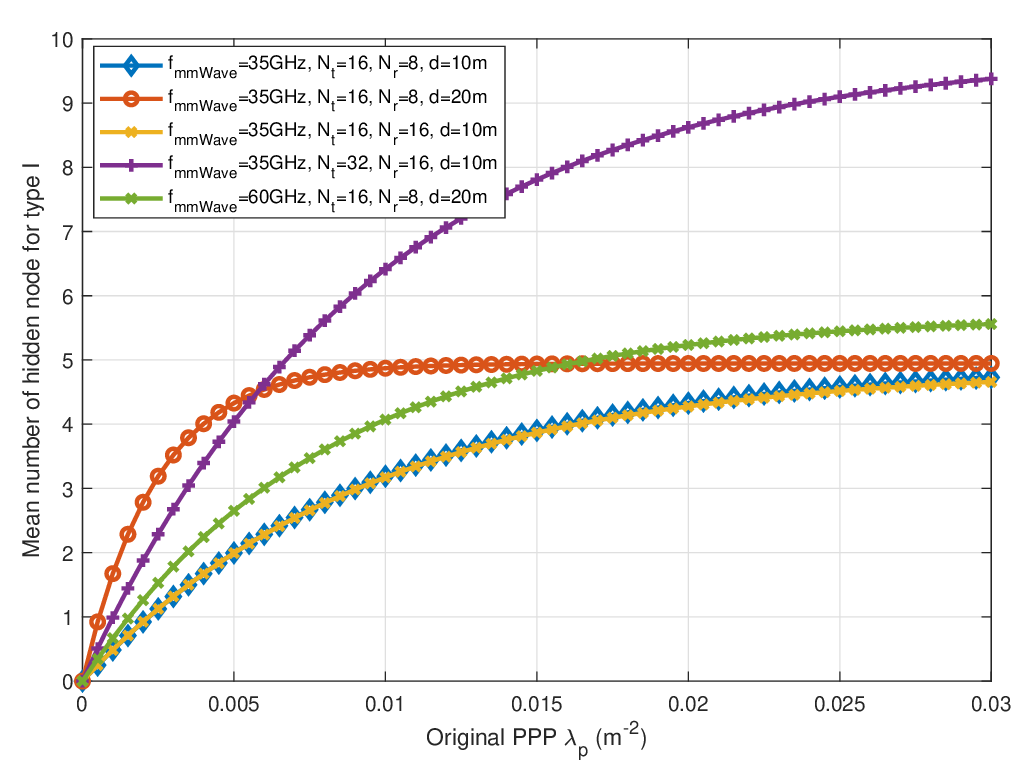}
    \caption{The mean number of hidden node for Type \Rmnum{2} under different system configurations}
    \label{fig:number2}
\end{figure}

\section{Conclusion}
\label{sec:conclusion}
This paper presented a marked point process framework to characterize the spatial configuration of transceivers operating under the RTS/CTS handshake mechanism in WLANs. The proposed model accommodates both omnidirectional and directional communication scenarios, enabling a unified analysis of interference and throughput performance.

Numerical results demonstrate distinct performance trends between the cross-link RTS/CTS and the directional RTS/CTS schemes. The cross-link RTS/CTS ensures more reliable link quality by effectively mitigating interference, but at the cost of reduced overall throughput. In contrast, the directional RTS/CTS achieves higher network throughput, yet suffers from degraded link reliability and a pronounced hidden node problem. 
These findings reveal a fundamental design trade-off between link reliability and spatial reuse. The appropriate choice between the two protocols should thus depend on the network’s target performance objectives—whether prioritizing throughput efficiency or communication robustness.

\appendices

\section{Proof of Lemma 1}
\label{appendix:D} 
The asymptotic gain $G$ is defined as 
\begin{align}
\label{G}
    G=\frac{\mathrm{MISR}_{R}}{\mathrm{MISR}_{\rm G-HCP}},
\end{align}
where $\mathrm{MISR}_{\rm G-HCP}$ is the mean interference-to-average-signal ratio of the G-HCP model, denoted as
\begin{align}
    \mathrm{MISR}_{\rm G-HCP}=\frac{\mathbb{E}_{(o, 0)}^{!}[I_{x_{o},R}]}{l(d)}.
\end{align}

Next, we focus on $\mathrm{MISR}_{R}$, which is the mean interference-to-average-signal ratio of the PPP model with nearest transmitter association and a LOS radium R.
Letting $R_k$ be the distance from the typical user to the $k$-th nearest cellular BS, $v_k= R_1/R_k$ is the distance ratio.
The interference-to-average-signal ratio for the LOS region, denoted by $\mathrm{I\bar{S}R}_{R}$, is defined as
\begin{align}
     \mathrm{{I\bar{S}R}}_{R}=\sum_{i=2}^{\infty}{v_{i}^{\alpha}}{\mathbb{P}(N \geq i)},
\end{align}
where ${\rm \mathbb{P}}(N \geq i)$ denotes the probability that the number of nodes is no less than $i$ within the LOS region.
For a PPP where the mean number of nodes in the LOS region is $N_{\rm LR} = \lambda_p \pi R^2$, ${\mathbb{P}(N \geq i)}$ follows as
\begin{align}
\label{pPPP}
    \mathbb{P}(N \geq i)=\sum_{k=i}^{\infty}\frac{N_{\rm LR}^ke^{-N_{\rm LR}}}{i!}.
\end{align}
Therefore, $\mathrm{ MISR}(R)$ can be written as
\begin{align}
\label{MISRex}
   \mathrm{ MISR}_{R}&\triangleq \nonumber
    \mathbb{E}[\mathrm{{I\bar{S}R}}_{R}]\\
    &=e^{-N_{\rm LR}}\sum_{i=2}^{\infty}{\frac{\Gamma(1+\frac{\alpha}{2}) \Gamma(i)}{\Gamma(i+\frac{\alpha}{2})}} \sum_{k=i}^{\infty}\frac{N_{\rm LR}^k}{k!}.
\end{align}
And $ \mathrm{ MISR}_{\infty}$ follows as \cite{6897962}
\begin{align}
\nonumber
    \mathrm{ MISR}_{\infty}\triangleq\mathbb{E}(\mathrm{I\bar{S}R}_{\infty})=\frac{2}{\alpha-2}.
\end{align}

Since the direct evaluation of (\ref{MISRex}) is computationally challenging, we introduce $g(R)$ defined as:
\begin{align}
\label{f}
     g(R)&= \nonumber
      \mathrm{ MISR}_{\infty}-\mathrm{MISR}_{R}\\
    &=e^{-N_{\rm LR}}\sum_{i=1}^{\infty}{\frac{i!}{(1+\frac{\alpha}{2})_{i}}} \sum_{k=0}^{i}\frac{N_{\rm LR}^k}{k!},
\end{align}
where $(x)_{i}=\frac{\Gamma(x+i)}{\Gamma(x)}$ is the Pochhammer symbol.
Changing the order of summation, (\ref{f}) can be rewritten as
\begin{align} 
    g(R)&= \nonumber
    e^{-N_{\rm LR}}\left(\sum_{k=0}^{\infty}\frac{N_{\rm LR}^k}{k!}\sum_{i=k}^{\infty}{\frac{i!}{(1+\frac{\alpha}{2})_{i}}}-1 \right)\\
    \nonumber
    &=e^{-N_{\rm LR}}\left(\sum_{k=0}^{\infty}\frac{N_{\rm LR}^k}{(1+\frac{\alpha}{2})_{k}}\sum_{i=0}^{\infty}{\frac{(k+1)_{i}}{(k+1+\frac{\alpha}{2})_{i}}}-1 \right)
    \\
    \nonumber
    &\overset{(b)}{=}e^{-N_{\rm LR}}\left(\sum_{k=0}^{\infty}\frac{N_{\rm LR}^k(2k+\alpha)}{(1+\frac{\alpha}{2})_{k}(\alpha-2)}-1 \right)
    \\
    &\overset{(c)}{=}\frac{\alpha}{\alpha-2}F_1(N_{\rm LR})+\frac{4N_{\rm LR}}{\alpha^{2}-4}F_{2}(N_{\rm LR})-e^{-N_{\rm LR}}.
\end{align}
where step (b) follows from the Gauss's Hypergeometric Theorem and (c) follows from the Kummer's transformation for the confluent hypergeometric function with $F_n(x)$ denoting
    \begin{align}
        F_n(x)={}_{1}F_{1}\bigg(\frac{\alpha}{2};n+\frac{\alpha}{2};-x\bigg).
    \end{align}



Therefore, $\mathrm{ MISR}_R$ can be written as
\begin{align}
\nonumber
   \mathrm{ MISR}_R&=\frac{2}{\alpha-2}-g(R),
\end{align}
which can be substituted into (\ref{G}) to get the equation in Lemma 1.
\section{Proof of Theorem 2}
\label{appendix:B}
The cumulative distribution (cdf) of the SIR is
\begin{align}
    F_{\rm SIR}(\theta)=\mathbb{P}(h<\theta \mathrm{I\bar{S}R}),
    \end{align}
    where $h$ follows a gamma distribution Gamma$(M, \frac{1}{M})$.

    Considering that $M\in\mathbb{N}$ and $x\to 0$, the cdf of $h$ is
    \begin{align}
    \nonumber
     F_{h}(x) &= 1-\sum_{k=0}^{M-1} \frac{(Mx)^k e^{-Mx}}{k!} \\
     &\sim \frac{M^{M-1}}{\Gamma(M)} x^M.
\end{align}
    Thus, we have 
    \begin{align}
    \nonumber
    \mathbb{P}(h<\theta {\rm I\bar{S}R} \mid {\rm I\bar{S}R} )
    \nonumber
     &=  F_{h}(\theta {\rm I\bar{S}R})  \\
    &\sim \frac{M^{M-1}}{\Gamma(M)} \theta^M \cdot \mathrm{I\bar{S}R}^M.
    \end{align}
    Taking the expectation, we obtain
    \begin{align}
        \mathbb{P}(\mathrm{SIR}<\theta) \sim  \frac{M^{M-1}}{\Gamma(M)} \theta^M \cdot \mathbb{E}({\rm I\bar{S}R}^{M}).
        \label{P}
    \end{align}

    Based on the traditional ASAPPP, the approximation of $\mathbb{P}(h>\theta \mathrm{I\bar{S}R})$ is written as 
    \begin{align}
    \label{appp}
        \mathbb{P}(h>\theta \mathrm{I\bar{S}R}) \approx \mathbb{P}_{\rm PPP}(h>\theta /G_{M}),
    \end{align}
    where 
     \begin{align}
        G_{M} = \left( \frac{\mathbb{E}(\mathrm{I\bar{S}R}^{M})}{\mathbb{E}(\mathrm{I\bar{S}R}_{\rm G-HCP}^{M})} \right)^{\frac{1}{M}}.
    \end{align}
  The approximation of $G_{M}$ is given by \cite{6897962}
   \begin{align}
   \label{appG}
        G_{M} \approx G,
    \end{align}
    where $G$ is given by Lemma 1.
    
Under Nakagami-$M$ fading conditions, the success probability $\mathbb{P}_{\rm PPP}(\mathrm{SIR}>\theta )$ in the cellular network for a given SIR threshold \(\theta\) is given in \cite{7913628}.
Substituting (\ref{appG}) into (\ref{appp}), Theorem 2 has been proved.

\section{Proof of Theorem 3}
\label{appendix:C}
First, we present the reduced second-order factorial measure $\mathcal{K}_{(r,\beta,\theta)}(B\times L)$ for the marked point process, under the assumption that the transmitter corresponding to $(r,\beta,\theta)$ is contained within the point process $\Phi$.
And $\lambda_b \mathcal{K}_{(r,\beta,\theta)}(B\times L)$ is the expected number of points located in B with marks taking values in L under the condition of the potential transmitter $y=(r,\beta)\in {\Phi}$.

Based on $\mathcal{K}_{(r,\beta,\theta)}(B\times L), B\subset\mathbb{R}^{2}, L\subset[0,2\pi]$, the expectation of hidden nodes $\mathbb{E}_{(o,0)}^![N]$ can be written as
\begin{equation}\begin{aligned}
 \label{hiddennode}
\mathbb{E}_{(o, 0)}^!\left[N_{\rm h}\right] &=\mathbb{E}_{(o,0)}^{!}\left(\sum_{(y,\theta)\in\tilde{\Phi}}\mathds{1}(x_o \in S_{y})\right)\\
 & =\lambda_b\int_{\mathbb{R}^{2}\times[0,2\pi]}\mathds{1}(x_o \in S_{\rm t}(y))\mathcal{K}_{(o, 0)}(\mathrm{d}(y,\theta)).
\end{aligned}\end{equation}

The second-order factorial moment measure for the marked point process can be formally expressed as
 \begin{equation}\begin{aligned}
 \label{3}
 & \alpha^{(2)}(B_{1}\times B_{2}\times L_{1}\times L_{2}) \\
 & =\mathbb{E}\left(\sum_{y_{1},y_{2}\in{\Phi}}^{\neq}1_{B_{1}}(y_{1})1_{B_{2}}(y_{2})1_{L_{1}}(\theta_{1})1_{L_{2}}(\theta_{2})\right)\\
 & =\mathbb{E}_{y_{1}\in {\Phi}}\left(\mathbb{E}\left(\sum_{y_{2}\in {\Phi}}^{y_{2}\neq y_{1}}1_{B_{2}}(y_{2})1_{L_{2}}(\theta_{2})\right)\right)\\
 & \overset{(a)}{=}\frac{\lambda_b^{2}}{2\pi}\int_{B_{1}\times L_{1}}\mathcal{K}_{(y,\theta)}((B_{2}-y)\times L_{2})\mathrm{d}(y,\theta)\\
 &\overset{(b)}{=}\int_{B_{1}\times L_{1}}\left(\int_{(B_{2}-y_{1})\times L_{2}}\varrho^{(2)}(y_{2},\theta_{1},\theta_{2})\mathrm{d}(y_{2},\theta_{2})\right)\mathrm{d}(y_{1},\theta_{1}),
\end{aligned}
\end{equation}
where (a) follows that 
$\mathbb{E}\left(\sum_{y_{2}\in {\Phi}}^{y_{2}\neq y_{1}}1_{B_{2}}(y_{2})1_{L_{2}}(\theta_{2})\right)$ is the expected number of points located in $B_{2}$ with marks taking values in $L_{2}$ under $y_1\in {\Phi}$, i.e., $\lambda_b \mathcal{K}_{(y_1,\theta)}((B_{2}-y_1)\times L_2)$ and (b) follows the definition of $\alpha^{(2)}$ and the second-order product $\varrho^{(2)}$.

The second-order product density $\varrho^{(2)}$ and $\mathcal{K}_{(o,0)}(B\times L)$ satisfy:
\begin{equation}\mathcal{K}_{(o,0)}(d(y,\theta)) =  \frac{2\pi}{\lambda_b^{2}}\varrho^{(2)}(y,0,\theta)\mathrm{d}(y,\theta),
\end{equation}
where $\mathrm{d}(y,\theta)$ denotes the differential measure in polar coordinates.
Therefore, $\mathbb{E}_{(o,0)}^![N_{\rm h}]$ can be written as
\begin{equation}
\begin{aligned}
\label{varN}
 \mathbb{E}_{(o,0)}^![N_{\rm h}]=\frac{2\pi}{\lambda_b}\int_{\mathbb{R}^{2}\times[0,2\pi]}
 \mathds{1}(x_o \in S_{\rm t}(y)\varrho^{(2)}(y,0,\theta)\mathrm{d}(y,\theta).
\end{aligned}
\end{equation}

$\varrho^{(2)}(y,\theta_{0},\theta)$ admits the explicit form
\begin{equation}
\label{var}
\varrho^{(2)}(y,\theta_{0},\theta)=\frac{\lambda_p^{2}}{4\pi^{2}}k(y,\theta_{0},\theta),
\end{equation}
where $k(y,\theta_{0},\theta)$ represents the probability that two transceiver pairs $(o, \theta_{0})$ and $(y, \theta)$ are both within $\tilde{\Phi}$.

Obviously, under the condition that one transceiver pair is the typical pair, $k(y,\theta_{0},\theta)$ can be written as:
\begin{equation}\begin{aligned}
k(y,\theta_{0},\theta)=k(r,\beta,\theta),
\end{aligned}\end{equation}
where $(r,\beta,\theta)$ is the parameter of the other transceiver pair.

Substituting (\ref{var}) into (\ref{varN}), we get
\begin{equation}\begin{aligned}
\label{en}
\mathbb{E}_{(o,0)}^!\left[N_{\rm h}\right]&=\frac{\lambda_p^{2}}{2\pi\lambda_b}\int_{0}^{\infty}\int_{0}^{2\pi}\int_{0}^{2\pi}\\
 &\mathds{1}(x_o \in S_{\rm t}(y)) k(r,\beta,\theta)r\mathrm{d}r\mathrm{d}\beta\mathrm{d}\theta,
\end{aligned}\end{equation}
where $S_{\rm t}(y)$ is the transmitter's exclusion region of the transceiver pair $(r,\beta,\theta)$.

The key point is to derive $\mathds{1}(x_o \in S_{\rm t}(y))$.
$y$ is the transmitter of the transceiver $(r, \beta, \theta)$.
Assuming that $r_y$ is the distance of $x_o$ and $y$ and $\theta_y$ is the angle difference between the vector $\overrightarrow{yx_o}$ and $\overrightarrow{v}=(\cos(\theta), \sin(\theta))$, if $x_o \in S_{\rm t}(y)$, $r_y$ and $\theta_y$ satisfy:
\begin{equation}\begin{aligned}
\label{rx-thetax}
&|\theta_y| \leq \frac{2\lambda}{\pi N_{\rm t} d_0}\arccos(\frac{r_y}{R_{\rm t}}).
\end{aligned}
\end{equation}

Substituting (\ref{rx-thetax}) into (\ref{en}), (\ref{en}) can be written as:
\begin{equation}\begin{aligned}
\label{en1}
  \mathbb{E}_{(o, 0)}^{!}\left[N_{\rm h}\right]&=\frac{\lambda_p^{2}}{\pi\lambda_b}\int_{0}^{\pi}\int_{0}^{\infty}\int_{-\frac{2\lambda}{\pi N_{\rm t} d_0}\arccos(\frac{r_y}{R_{\rm t}})}^{\frac{2\lambda}{\pi N_{\rm t} d_0}\arccos(\frac{r_y}{R_{\rm t}})}\\
 & k\bigg(r,\beta,\theta-\arccos\Big(\frac{r\cos\beta-d}{r_y}\Big)\bigg)r\mathrm{d}\beta\mathrm{d}r\mathrm{d}\theta,
\end{aligned}\end{equation}
where $r_y=\sqrt{r^2+d^2-2rd\cos\beta}$.

Noting $r_y\leq R_{\rm t}$, $r$ should satisfy:
\begin{equation}\begin{aligned}
\label{maxr}
r\leq R_{\rm t}\sqrt{1-(\frac{d\sin\beta}{R_{\rm t}})^2}-d\cos\beta,
\end{aligned}\end{equation}
which follows the equation:
\begin{equation}\begin{aligned}
r^2+d^2-r_y^2=2rd\cos\beta.
\end{aligned}\end{equation}
Substituting (\ref{maxr}) into (\ref{en1}) has been mentioned above completes the proof of Theorem \ref{hnthm}.
\bibliographystyle{IEEEtran}
\bibliography{csma-isit}

\end{document}